\def\BibTeX{{\rm B\kern-.05em{\sc i\kern-.025em b}\kern-.08em
    T\kern-.1667em\lower.7ex\hbox{E}\kern-.125emX}}
\newtheorem{claim}{Claim}
\newtheorem{theorem}{Theorem}
\newtheorem{definition}{Definition}
\newtheorem{lemma}{Lemma}
\newtheorem{remark}{Remark}
\title{Computation in Multicast Networks: \\ Function Alignment and Converse Theorems}
\author{Changho Suh, Naveen Goela and Michael Gastpar \\
\thanks{This work was presented in part at the IEEE Information Theory Workshop 2012, and the Allerton Conference 2012.}
\thanks{C.~Suh is with the School of Electrical Engineering, KAIST, South Korea (Email: {\sf chsuh@kaist.ac.kr}).}
\thanks{N.~Goela is with Technicolor Research, Los Altos, CA, USA (Email: {\sf naveen.goela@gmail.com}).}
\thanks{M.~Gastpar is with the School of Computer and Communication Sciences, EPFL, Switzerland (Email: {\sf michael.gastpar@epfl.ch}), and with the Department of Electrical Engineering and Computer Sciences, University of California, Berkeley, Berkeley, USA (Email: {\sf gastpar@eecs.berkeley.edu}).}
\thanks{C.~Suh was supported by Institute for Information \& communications Technology Promotion (IITP) grant funded by the Korea government (MSIP) (No.R0190-15-2030, Reliable crypto-system standards and core technology development for secure quantum key distribution network).}
}
\begin{document}

\maketitle

\begin{abstract}
The classical problem in network coding theory considers \emph{communication} over multicast networks. Multiple transmitters send independent messages to multiple receivers which decode the same set of messages. In this work, \emph{computation} over multicast networks is considered: each receiver decodes an identical \emph{function} of the original messages. For a countably infinite class of two-transmitter two-receiver single-hop linear deterministic networks, the computation capacity is characterized for a linear function (modulo-$2$ sum) of Bernoulli sources. A new upper bound is derived that is tighter than cut-set based and genie-aided bounds. A matching inner bound is established via the development of a \emph{network decomposition theorem} which identifies elementary parallel subnetworks that can constitute an original network without loss of optimality. The decomposition theorem provides a conceptually-simple proof of achievability that generalizes to $L$-transmitter $L$-receiver networks.
\end{abstract}
\begin{keywords}
Computation Capacity, Function Alignment, Network Decomposition Theorem
\end{keywords}

\section{Introduction}

Recently coding for computation in networks has received considerable attention with applications in sensor networks~\cite{PRKumar:it} and cloud computing scenarios~\cite{DimakisKannan:it10,Dimakis:Proceedings}. In a sensor network, a fusion node may be interested in computing a relevant function of the measurements from various data nodes. In a cloud computing scenario, a client may download a function or part of the original source information that is distributed (e.g. using a maximum distance separable code) across multiple data nodes.

The simplest setting for computation in networks consists of multiple sources transmitting information to a \emph{single} receiver which computes a function of the original sources. Appuswamy \emph{et al.} study the fundamental limits of computation for linear and general target function classes for single-receiver networks~\cite{Appuswamy:it11}. While limited progress has been made for general target functions, the problem of linear function computation in single-receiver networks has been solved in part due to a duality theorem establishing an equivalence to the classical problem of communication over multicast networks~\cite{ahlswede:it}.

Several results over the past decade have contributed to the understanding of classical communication in multicast networks in which the task is to transmit raw messages from transmitters to a set of receivers with identical message demands. The celebrated work of Ahlswede \emph{et al.}~\cite{ahlswede:it} established that the cut-set bound is tight for multicast communication. Subsequent research developed practical linear network coding strategies ranging from random linear codes to deterministic polynomial-time code constructions~\cite{Yeung:it,Koetter:it,Ho:it,Jaggi:it}. The success of traditional multicast communication motivates us to explore the fundamental limits of multicasting a linear function in \emph{multiple-receiver} networks as a natural next step. For this open problem, some facts are known based on example networks: (a) Random codes are insufficient in achieving capacity limits, and structured codes achieve higher computation rates~\cite{NazerGastpar:it07}; (b) Linear codes are insufficient in general for computation over multi-receiver networks (cf. both~\cite{RaiDey:it12} and~\cite{Dougherty:it05}) and non-linear codes may achieve higher computation rates.

To make progress on the problem of multicasting a function, we consider a multiple-receiver network scenario in which all of the receivers compute a linear function (modulo-$2$ sum) of the independent Bernoulli sources generated at transmitters. Specifically, we consider the Avestimehr-Diggavi-Tse (ADT) deterministic network model~\cite{Salman:IT11} which well abstracts wireless Gaussian networks. In the context of classical communication, it has been well known that ADT networks can approximate wireless Gaussian networks within a constant gap to the optimality in capacity~\cite{Salman:IT11,BT:Euro}. Recently a similar approximation result has been established for the problem of computation in which a single receiver wishes to compute a linear function of multiple Gaussian sources~\cite{ZhanPark:JSAC}. Specifically \cite{ZhanPark:JSAC} employs lattice codes to show that a multiple-source single-destination Gaussian network can be approximated to a class of linear deterministic networks (which includes the ADT network as a special case), within a constant factor of the optimal performance w.r.t. the distortion for computing the sum of the Gaussian sources. In this work, we intend to extend this approximation approach to more general computation scenarios in Gaussian networks. As an initial effort, we consider an $L$-transmitter $L$-receiver ADT network with the function multicast demand.

In this paper, we derive a new upper bound that is tighter than cut-set based bounds and genie-aided bounds. Especially in the case of $L=2$, we establish a matching inner bound to characterize the computation capacity. The achievability builds upon our development of a \emph{network decomposition theorem} which identifies elementary parallel subnetworks that can constitute an original network without loss of optimality. The network decomposition offers a conceptually simple achievability proof which we use to generalize to an arbitrary value of $L$. In the $L$-user case, we show the optimality of our achievability in the limit of $L$. Our achievable scheme is intimately related to the concept of \emph{interference alignment} although the purpose of alignment is different. In our problem, the alignment idea is employed to compute a desired function with a smaller number of linear-subspace signal dimension than the number of sources involved in the function.

{\bf Related Work:} In~\cite{Ramamoorthy:ISIT,RaiDey:it12,RamamoorthyLangberg:journal}, the computation capacity for multicasting a sum of sources is explored for arbitrary multiple-source multiple-destination networks. Rai and Dey~\cite{RaiDey:it12} proved that there exists a linear solvably equivalent sum-network for any multiple-unicast network and vice-versa. Ramamoorthy and Langberg~\cite{RamamoorthyLangberg:journal} characterized necessary and sufficient conditions for communicating sums of sources of two-source $L$-destination (or $L$-source two-destination) networks, when the entropy of each source is limited by $1$. On the other hand, we consider sources without entropy constraints and establish the exact computation capacity of an ADT multiple-receiver network.

\section{Model}
\label{sec:model}

\begin{figure}[t]
\begin{center}
{\epsfig{figure=./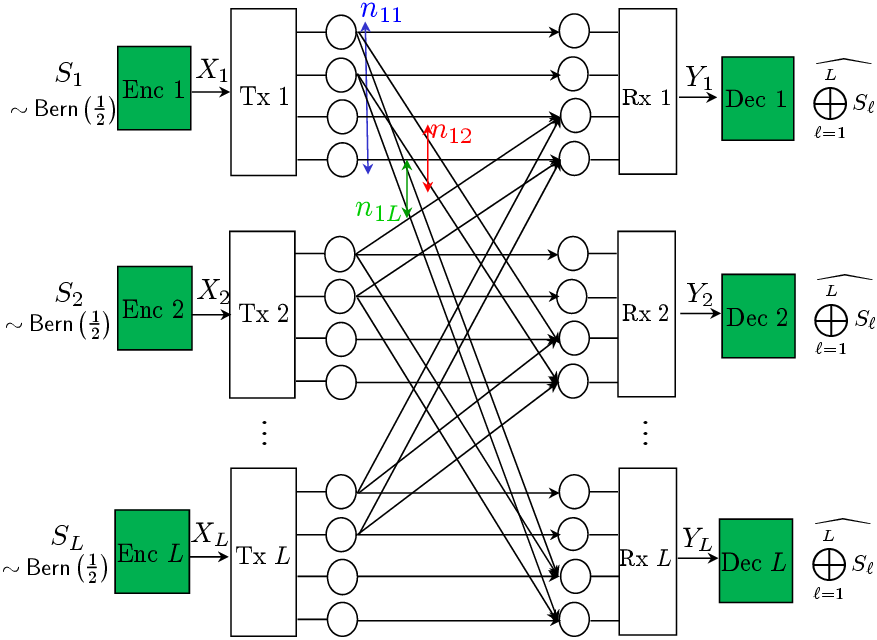, angle=0, width=0.45\textwidth}}
\end{center}
\caption{$L$-transmitter $L$-receiver Avestimehr-Diggavi-Tse (ADT) deterministic network.}
\label{fig:model}
\end{figure}

Consider an $L$-transmitter $L$-receiver ADT deterministic network depicted in Fig.~\ref{fig:model}. This network is described by integer parameters $n_{ij}$'s, each indicating the number of signal bit levels from transmitter $i$ $(i=1, \cdots, L)$ to receiver $j$ $(j=1,\cdots, L)$. In the example of Fig.~\ref{fig:model}, $n_{11}=4$, $n_{12}=2$ and $n_{1L}=2$.
Let $X_{\ell} \in \mathbb{F}_2^{q}$ be transmitter $\ell$'s encoded signal where $q=\max_{ij} n_{ij}$. The received signal at Rx $\ell$ is then given by
\begin{align}
\begin{split}
\label{eq:receivedsignals}
Y_{\ell} = {\bf G}^{q-n_{1 \ell}} X_1 \oplus {\bf G}^{q-n_{2 \ell }} X_2 \oplus \cdots \oplus {\bf G}^{q-n_{L \ell }} X_L,  \end{split}
\end{align}
where ${\bf G}$ is the $q$-by-$q$ shift matrix, i.e., $[{\bf G}]_{ij} = \mathbf{1} \{ i = j+ 1\}$ ($1 \leq i \leq q; 1 \leq j \leq q$), and operations are performed in $\mathbb{F}_2$. Here $\oplus$ indicates the bit wise XOR.

Each receiver wishes to compute modulo-2 sums of the $L$ Bernoulli sources $(S_1^K, \cdots, S_{L}^K)$, generated at the $L$ transmitters, with $N$ uses of the network. Here we use shorthand notation to indicate the sequence up to $K$, e.g., $S_{\ell}^K:=(S_{\ell 1}, \cdots, S_{ \ell K})$. We assume that  $(S_1^K, \cdots, S_{L}^K)$ are independent and identically distributed with ${\sf Bern} (\frac{1}{2})$. Transmitter $\ell$ uses its encoding function to map $S_\ell^K$ to a length-$N$ codeword $X_\ell^N$. Receiver $\ell$ uses a decoding function $d_\ell$ to estimate $\bigoplus_{\ell=1}^{L }S_{\ell}^K$ from its received signal $Y_\ell^N$.  An error occurs whenever $d_\ell \neq \bigoplus_{\ell=1}^{L }S_{\ell}^K$. The average probabilities of error are given by
$\lambda_{\ell} = \mathbb{E} [ \textrm{Pr} (d_\ell \neq \bigoplus_{\ell=1}^{L }S_{\ell}^K) ], \ell=1,\cdots,L$.

We say that the computation rate $R_{\sf comp}=\frac{K}{N}$ is achievable if there exists a family of codebooks and  encoder/decoder functions such that the average decoding error probabilities of $(\lambda_1, \cdots, \lambda_L)$ go to zero as code length $N$ tends to infinity. The computation capacity $C_{\sf comp}$ is the supremum of the achievable computation rates.

In the case of $L=2$, we classify networks into two classes depending on a channel parameter condition.
\begin{definition}
\label{def:networkclass}
A network is said to be \emph{degenerate} if $n_{11}-n_{12}=n_{21}-n_{22}$. A network is said to be \emph{non-degenerate} if $n_{11}-n_{12} \neq n_{21}-n_{22}$.
\end{definition}
\begin{remark}
Suppose that $n_{11}-n_{12}=n_{21}-n_{22} \geq 0$. In this case, one can see that
\begin{align*}
{\bf G}^{n_{21}-n_{22}} Y_1 &= {\bf G}^{q-n_{11}+ n_{21}-n_{22}} X_1 \oplus {\bf G}^{q-n_{22}} X_{2}  \\
& = {\bf G}^{q-n_{12}} X_1 \oplus {\bf G}^{q-n_{22}} X_{2} = Y_2.
\end{align*}
implying that $Y_2$ is a \emph{degenerated} version of $Y_1$. For the other case of $n_{11}-n_{12}=n_{21}-n_{22} \leq 0$, one can readily see that $Y_1$ is a degenerated version of $Y_2$ as $Y_1 = {\bf G}^{n_{12} - n_{11}} Y_2$. $\square$
\end{remark}

\section{Main Results}

\subsection{2-by-2 Network}

\begin{theorem}
\label{thm:computingcapacity}
\begin{align}
\label{eq:computingcapacity_deg}
C_{\sf comp} \leq \min \{ n_{11},n_{12},n_{22},n_{21} \}.
\end{align}
For degenerate networks: $n_{11} - n_{12} = n_{21}  - n_{22}$, this upper bound is achievable.
\end{theorem}
\begin{proof}
The standard cut-set argument establishes the upper bound. Using Fano's inequality: $H(S_1^K \oplus S_2^K  | Y_1^N ) \leq 1 + \textrm{Pr} \left( d_1 \neq S_1^K \oplus S_2^K \right)  K$, and denoting $\epsilon_N : = 1/N +\textrm{Pr} \left( d_1 \neq S_1^K \oplus S_2^K \right)K/N$, we get:
 \begin{align*}
 N(R_{\sf comp} -\epsilon_N) &\leq I(S_1^K \oplus S_2^K; Y_1^N) \\
 &\leq I(S_1^K \oplus S_2^K; Y_1^N| S_2^K, X_2^N) \\
 & \leq H(Y_{1}^N|X_2^N) \leq Nn_{11}
 \end{align*}
  where the second inequality follows from the non-negativity of mutual information and the fact that $S_2^K$ is independent of $S_1^K \oplus S_2^K$. This yields $R_{\sf comp} \leq n_{11}$. Similarly one can prove that $R_{\sf comp} \leq \min \{ n_{12}, n_{21}, n_{22} \}$.

The achievability proof is as follows. Assume that $n_{11}-n_{12}=n_{21}-n_{22} \geq 0$. Then $Y_2$ is a degenerated version of $Y_1$: $Y_2 = {\bf G}^{n_{21}-n_{22}} Y_1$. This shows an equivalence to a single-receiver case which concerns receiver 2's demand only. Hence, the computation rate in this case is the same as that of a multiple-access channel having receiver 2 as the receiver. So $R_{\sf comp} \geq \min \{ n_{12}, n_{22} \}$~\cite{NazerGastpar:it07,Appuswamy:it11}. Similarly for the other case of $n_{11}-n_{12}=n_{21}-n_{22} \leq 0$, one can show that $Y_1$ is a degenerated version of $Y_2$ and therefore the network becomes equivalent to a single-receiver network w.r.t. receiver 1 where $R_{\sf comp} \geq \min \{ n_{11}, n_{21} \}$. In the first case, $n_{12} \leq n_{11}$ and $n_{22} \leq n_{21}$; hence $\min \{n_{12}, n_{22} \} \leq \min \{ n_{11}, n_{21} \}$. In the second case, on the other hand, $\min \{n_{12}, n_{22} \} \geq \min \{ n_{11}, n_{21} \}$. Therefore, $R_{\sf comp} \geq \min \{ n_{12}, n_{22}, n_{11}, n_{21} \}$.
\end{proof}

\begin{theorem}[Upper Bound for Non-degenerate Networks]
\label{thm:upperbound}
\begin{align}
\begin{split}
\label{eq:upper2}
C_{\sf comp} \leq \frac{ \max(n_{11},n_{21}) + \max(n_{22},n_{12}) }{3}.
 \end{split}
\end{align}
\end{theorem}
\begin{proof}
See Section~\ref{sec:proofofupperbound}.
\end{proof}

We show the tightness of the above bound for the case of $n:=n_{11} = n_{22}$ and $m:=n_{12}=n_{21}$ that we call a symmetric case.

\begin{theorem}[Symmetric Network]
\label{thm:sym_computingcapacity}
For $n:=n_{11} = n_{22}$ and $m:=n_{12}=n_{21}$,
\begin{align}
\label{eq:sym_computingcapacity}
C_{\sf comp}= \left\{
  \begin{array}{ll}
    \min \left \{ m,n, \frac{2}{3} \max(m,n) \right \}, & \hbox{$m \neq n$;} \\
    n, & \hbox{$m=n$.}
  \end{array}
\right.
\end{align}
\end{theorem}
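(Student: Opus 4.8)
The converse requires no new work. By Lemma~\ref{lemma:network_class}, a symmetric network is degenerate precisely when $n-m = m-n$, i.e. when $m=n$; in that case Theorem~\ref{thm:computingcapacity} gives $C_{\sf comp} = \min\{n,m,n,m\} = n$, which is the $m=n$ branch. When $m \neq n$ the network is non-degenerate, so both bounds of Theorem~\ref{thm:upperbound} apply: \eqref{eq:upper1} specializes to $\min\{n,m\}$ and \eqref{eq:upper2} to $\frac{\max(n,m)+\max(n,m)}{3} = \frac{2}{3}\max(m,n)$, and their minimum is exactly $\min\{m,n,\frac{2}{3}\max(m,n)\}$. Thus it only remains to prove achievability for $m\neq n$. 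Relabelling the two transmitters maps the parameters $(n,m)$ to $(m,n)$ and leaves the symmetric demand $S_1^K\oplus S_2^K$ invariant, so I may assume without loss of generality that $n>m$; the target rate is then $\min\{m,\tfrac{2}{3}n\}$.

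For achievability I would use \emph{function alignment}. Since both receivers want the single quantity $W^K := S_1^K\oplus S_2^K$, the guiding principle is to encode with a common linear map, $X_\ell^N = {\bf V}\,u_\ell$, so that wherever the two transmit signals overlap at a receiver they add to produce a symbol of $W$ rather than interference. The obstruction is structural: receiver $1$ sees $X_1$ through the shift ${\bf G}^{q-n}$ and $X_2$ through ${\bf G}^{q-m}$, while receiver $2$ sees the \emph{opposite} pairing; hence the level at which $X_1$ must meet $X_2$ to align is forced in opposite directions at the two receivers, and simultaneous perfect alignment is impossible. The resource that breaks the deadlock is the set of \emph{private} levels: of transmitter $1$'s top $n$ levels only the top $m$ reach receiver $2$, so $n-m$ levels are seen by receiver $1$ alone (symmetrically for transmitter $2$). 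These private levels let a receiver resolve, by successive cancellation, exactly those symbols that could not be aligned in its favour.

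I would then split into the two regimes dictated by the formula. When $m\le\tfrac{2}{3}n$ the cross-link budget $m$ is the bottleneck and is slack relative to the alignment constraint; here an assignment that places $W$ on the $m$ overlapping levels (using the private levels to clear the leftover cross terms) attains rate $m=\min\{m,n\}$. The substantive case is $\tfrac{2}{3}n<m<n$, where the goal is rate $\tfrac{2}{3}n$. For this I would use a block of $N=3$ channel uses and design ${\bf V}$ so that $K=2n$ sum-symbols are carried over the block: each sum-symbol is arranged to be aligned (delivered as a clean XOR) at one receiver and to be recoverable at the other from its private levels after cancelling already-decoded symbols. A direct accounting of how many symbols can be aligned at one receiver while remaining private-recoverable at the other caps the count at $2n$ per three uses, matching $\tfrac{2}{3}n$.

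The main obstacle is precisely this last construction: one must exhibit an explicit level-to-symbol assignment over the three-use block and then prove that the induced linear map from the $2n$ sum-symbols to each receiver's $3n$-bit observation has full column rank $2n$ — equivalently, that the successive-cancellation order terminates with no residual, uncancellable interference — and that this holds uniformly for every integer pair $(n,m)$ with $\tfrac{2}{3}n<m<n$. I expect this to need a carefully chosen alignment pattern together with a rank (or inductive cancellation) argument, which is the heart of the function-alignment achievability.
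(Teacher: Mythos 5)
Your converse is complete and matches the paper's: Lemma~\ref{lemma:network_class} makes the symmetric network degenerate exactly when $m=n$ (then Theorem~\ref{thm:computingcapacity} gives $C_{\sf comp}=n$), and for $m\neq n$ the bounds \eqref{eq:upper1} and \eqref{eq:upper2} specialize to $\min\{m,n\}$ and $\frac{2}{3}\max(m,n)$. The WLOG reduction to $m<n$ is also the paper's. One cosmetic slip: for $m\le n/2$ there are no overlapped levels at all, since $N_1=2m-n\le 0$, so your description of ``placing $W$ on the $m$ overlapping levels'' cannot literally apply there; the paper handles $0\le\alpha\le\frac{1}{2}$ by the separation scheme and reserves the overlap argument for $\frac{1}{2}\le\alpha\le\frac{2}{3}$, where it does achieve $m$.

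The genuine gap is exactly where you flag it yourself: in the regime $\frac{2}{3}n<m<n$ you describe the correct architecture (three channel uses, $2n$ sum-symbols, each aligned at one receiver and recovered at the other via side information and cancellation) but never exhibit a code or prove decodability, and that construction \emph{is} the achievability proof — without it the theorem is only half proved. The paper closes this by choosing $M_1=M_2=n$ beamforming matrices ${\bf V}_\ell=[{\bf V}\;\;{\bf P}_\ell]$ with ${\bf V}={\bf I}_3\otimes[{\bf e}_1^{(n)}\;\cdots\;{\bf e}_{n-m}^{(n)}]$ and the explicit ${\bf P}_1,{\bf P}_2$ of \eqref{eq:P1P2_gen}, then proving in Lemma~\ref{lemma:symm_achiev} (Appendix~\ref{app:lemma1}, by Gaussian elimination) that $[{\bf V}_1,\;{\bf T}{\bf V}_2,\;{\bf T}^2{\bf V}_1]$ and $[{\bf V}_2,\;{\bf T}{\bf V}_1,\;{\bf T}^2{\bf V}_2]$ have rank $3n$. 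Note also that this $3n\times 3n$ full-rank condition is the right decodability criterion, and it is stronger than the one you state: since one half of the symbols arrives at each receiver \emph{unaligned}, the receiver must separate the aligned sums and the two unaligned blocks individually (dimension count $2M_1+M_2=3n$), not merely invert a rank-$2n$ map from the sum-symbols to the observation — your successive-cancellation parenthetical gestures at this but is not a proof. If constructing ${\bf P}_1,{\bf P}_2$ directly seems unwieldy, the paper's Section~\ref{sec:decomposition} offers an alternative: decompose $(m,n)$ via Theorem~\ref{thm:networkdecomposition} into gap-1 models and invoke Lemma~\ref{lemma:comprates-gap1models}, which itself rests on small explicit codes for the $(3,4)$ and $(4,5)$ models. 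Either way, one of these explicit constructions must be supplied.
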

\begin{proof}
The converse proof is immediate from Theorems~\ref{thm:computingcapacity} and~\ref{thm:upperbound}. See Section~\ref{sec:decomposition} for the achievability proof.
\end{proof}

\begin{remark}[Generalization to $p$-ary Models]
The results in Theorems~\ref{thm:computingcapacity},~\ref{thm:upperbound}, and~\ref{thm:sym_computingcapacity} hold for  $p$-ary models in which $S_{\ell}$'s and channel input/output are in $\mathbb{F}_p$, the desired function is modulo-$p$ addition, and the channel operation is also modulo-$p$ addition. Here $p$ is a prime number. Specifically, in the $p$-ary case, the converse proof of Theorem~\ref{thm:computingcapacity} starts with a slightly different inequality, yet yielding the same result:  $N( (\log_2 p ) R_{\sf comp} - \epsilon_N )$ $\leq H(Y_1^N |X_2^N ) $ $\leq N (\log_2 p ) n_{11}$. One can make the same argument for the other theorems. This will be clearer in the detailed proof that will be presented in Section~\ref{sec:proofofupperbound}.
\end{remark}

\begin{remark}[Comparison to Separation Scheme]
\label{remark:Separation} A baseline strategy for the problem at hand is to let both receivers first fully recover both sources, $S_1$ and $S_2,$ and only then apply the desired computation. Since this strategy separates communication from computation, we refer to it as the {\em separation scheme.} The performance of this strategy is well known: The capacity region for data transmission is simply the intersection of the capacity regions of the two multiple-access channels, one from both transmitters to Receiver 1, the other from both transmitters to Receiver 2. For symmetric models ($n_{11} = n_{22}=n$ and $n_{12}=n_{21}=m$), this evaluates to the message rate region characterized by $R_1  \leq \min (m,n)$, $R_2 \leq \min (m,n)$ and $R_1 + R_2 \leq \max (m,n).$ The corresponding achievable computation rate is simply the maximum symmetric rate point in this region, and thus,
$\frac{R_{\sf comp}^{\sf sep}}{q} \geq \min  \{ \alpha, \frac{ 1}{2}  \}$,
where $\alpha:=\frac{\min(m,n)}{\max(m,n)}$.
This is illustrated in Figure~\ref{fig:computingcapacity}.
For the regime $0 \leq \alpha \leq \frac{1}{2}$, the separation strategy is optimal,
but for $\frac{1}{2} < \alpha \leq 1,$ it is strictly suboptimal. $\square$
\end{remark}

\begin{figure}[t]
\begin{center}
{\epsfig{figure=./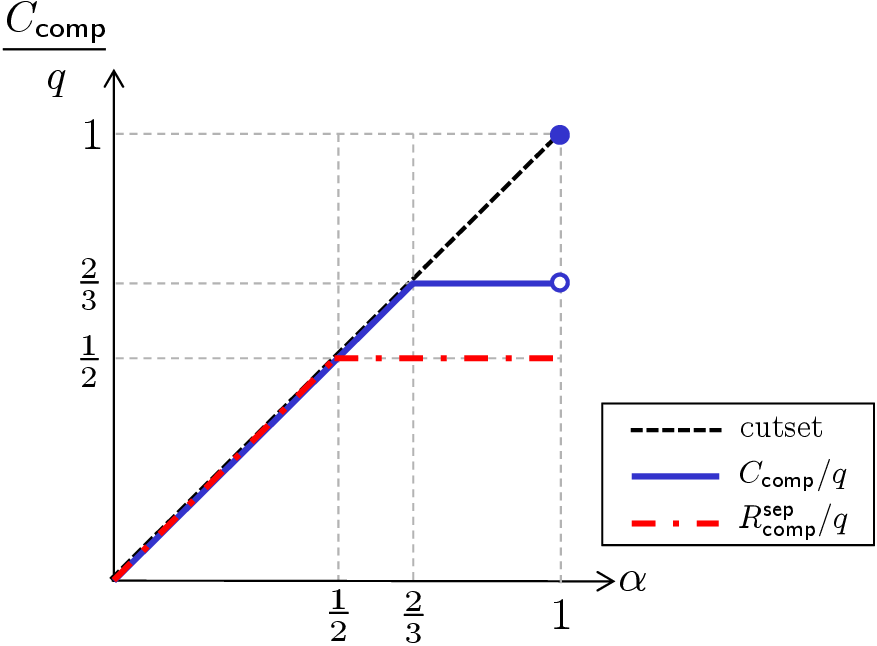, angle=0, width=0.4\textwidth}}
\end{center}
\caption{Computation capacity of the symmetric network ($n := n_{11} = n_{22}$ and $ m:= n_{12} = n_{21}$). Here $\alpha:= \frac{\min(m,n)}{\max(m,n)}$ and $q:=\max(m,n)$.} \label{fig:computingcapacity}
\end{figure}

\subsection{$L$-by-$L$ Network}

We consider a symmetric setting where the only two integer parameters of $(m,n)$ describe the network. Here $n$ indicates the number of signal bit levels from transmitter $\ell$ to receiver $\ell$; and $m$ denotes the number of signal bit levels from transmitter $\ell$ to receiver $\ell' (\neq \ell)$.

\begin{theorem}[$L$-by-$L$ Symmetric Network]
\label{thm:Luser_computingcapacity}
\begin{align*}
R_{\sf comp} \geq
\left\{
  \begin{array}{ll}
    \min \left \{ m, n, \frac{1}{2} \max (n,m) \right \}, & \hbox{$m \neq n$;} \\
    n, & \hbox{$m=n$.}
  \end{array}
\right.
\end{align*}
\begin{align*}
C_{\sf comp} \leq \left\{
  \begin{array}{ll}
    \min \left \{ m, n, \frac{L}{2L-1} \max (n,m) \right \}, & \hbox{$m \neq n$;} \\
    n, & \hbox{$m=n$.}
  \end{array}
\right.
\end{align*}

\end{theorem}
\begin{proof}
See Section~\ref{sec:L-L-symmetric}.
\end{proof}
\begin{remark}
Note that our information-theoretic upper bound approaches the achievable rate as $L$ tends to infinity. As in the $L=2$ case, this result also holds for general $p$-ary models. This will be clarified in Section~\ref{sec:L-L-symmetric}. $\square$
\end{remark}

\section{Proof of Theorem~\ref{thm:upperbound}}
\label{sec:proofofupperbound}

\begin{lemma} \label{lemma:network_class}
For non-degenerate networks ($n_{11} - n_{12} \neq n_{21}  - n_{22}$), there exists $(i,j)$ such that ${\bf G}^{q-n_{ij}} X_i$ can be reconstructed from the pair $(Y_1, Y_2)$.
\end{lemma}
\begin{proof}
See Appendix~\ref{app:lemma2}.
\end{proof}
Using this lemma, we can assume that without loss of generality, ${\bf G}^{q-n_{12}} X_1$ is a function of $(Y_1,Y_2)$.

Our proof includes general $p$-ary models. Starting with Fano's inequality, we get:
\begin{align*}
&N( 3 (\log_2 p) R_{\sf comp} - \epsilon_N) \\
&{\leq} I (S_1^K \oplus S_2^K; Y_1^N) + I (S_1^K \oplus S_2^K; Y_2^N )  + I (S_1^K \oplus S_2^K; Y_2^N)  \\
&\overset{(a)}{\leq} [H(Y_1^N) - H (Y_1^N| S_1^K \oplus S_2^K)]  \\
& \;\; + [H(Y_2^N) - H (Y_2^N| S_1^K \oplus S_2^K, Y_1^N)] + I (S_1^K \oplus S_2^K; Y_2^N) \\
& \leq H(Y_1^N) + H(Y_2^N)  \\
& \;\; - H (Y_1^N, Y_2^N| S_1^K \oplus S_2^K)  + I (S_1^K \oplus S_2^K; Y_2^N, S_2^K)   \\
& \overset{(b)}{=} H(Y_1^N) + H(Y_2^N) \\
& \;\; - H(Y_1^N, Y_2^N| S_1^K \oplus S_2^K)  + H ( \left\{{\bf G}^{q-n_{12}} X_{1i} \right\}_{i=1}^N |S_2^K ) \\
& \overset{(c)}{\leq } H(Y_1^N) + H(Y_2^N) \leq \sum [H(Y_{1i}) + H( Y_{2i})] \\
& \leq N (\log_2 p) [\max (n_{11},n_{21}) + \max (n_{12},n_{22})]
\end{align*}
where $(a)$ follows from the fact that conditioning reduces entropy; $(b)$ follows from the fact that $S_2^K$ is independent of $S_1^K \oplus S_2^K$, and that $X_2^N$ is a function of $S_2^K$; and $(c)$ follows from $H(Y_1^N, Y_2^N| S_1^K \oplus S_2^K)  \geq H ( \left\{ {\bf G}^{q-n_{12}} X_{1i} \right\}_{i=1}^N |S_2^K )$  (see Claim~\ref{claim0} below). Hence, we get the desired bound.

\begin{claim}
\label{claim0}
\begin{align}
H(Y_1^N, Y_2^N| S_1^K \oplus S_2^K)  \geq H \left( \left\{ {\bf G}^{q-n_{12}} X_{1i} \right\}_{i=1}^N |S_2^K \right).
\end{align}
\end{claim}
\begin{proof}
\begin{align*}
& H(Y_1^N, Y_2^N| S_1^K \oplus S_2^K)  - H \left( \left\{ {\bf G}^{q-n_{12}} X_{1i} \right\}_{i=1}^N |S_2^K \right) \\
& \overset{(a)}{=} H \left(Y_1^N, Y_2^N,  \left\{ {\bf G}^{q-n_{12}} X_{1i} \right\}_{i=1}^N | S_1^K \oplus S_2^K \right) \\
& \;\;  - H \left( \left\{{\bf G}^{q-n_{12}} X_{1i} \right\}_{i=1}^N|S_2^K \right)  \\
&\geq H \left( \left\{ {\bf G}^{q-n_{12}} X_{1i} \right\}_{i=1}^N | S_1^K \oplus S_2^K  \right) \\
& \;\; - H \left( \left\{ {\bf G}^{q-n_{12}} X_{1i} \right\}_{i=1}^N | S_2^K \right) \\
& \overset{(b)}{=} H \left( \left\{ {\bf G}^{q-n_{12}} X_{1i} \right\}_{i=1}^N \right) - H \left( \left\{ {\bf G}^{q-n_{12}} X_{1i} \right\}_{i=1}^N | S_2^K \right) \geq 0.
\end{align*}
where $(a)$ follows from our hypothesis that ${\bf G}^{q-n_{12}}X_1$ is a function of $(Y_1, Y_2)$ and $(b)$ follows from the fact that $X_1^N$ is a function of $S_1^K$ that is independent of $S_1^K \oplus S_2^K$.
\end{proof}

\section{Achievability Proof of Theorem~\ref{thm:sym_computingcapacity}}
\label{sec:decomposition}

In this section, we present a network decomposition theorem that permits to decompose a network into elementary subnetworks. Our achievable strategy is then developed {\em separately} for each subnetwork.
In general, one must expect such a decomposition to entail a loss of optimality.
However, for the case $L=2$ (two transmitters, two receivers), our converse proof in Theorem~\ref{thm:upperbound} implies that our network decomposition does not lead to any loss in performance.

\subsection{Achievability via Network Decomposition}

In the sequel, we will use the terminology $(m,n)$ {\it model}  to mean a symmetric $L$-user ADT network.
In the case $m<n,$ each transmitter and each receiver has $n$ levels, and at each receiver, the last $m$ levels
are being interfered in a modulo-sum fashion by the $m$ first levels of each interfering transmitter, as illustrated in Figure~\ref{fig:model}. In the case $m>n,$ each transmitter and each receiver has $m$ levels. For each transmitter, its first $n$ levels connect to the last $n$ levels of its corresponding receiver. Furthermore, at each receiver, the last $n$ levels associated with its corresponding transmitter are being interfered in a modulo-sum fashion by the last $n$ levels of each interfering transmitter; at the remaining top $(m-n)$ levels, collision occurs only across interfering transmitters.
As illustrated by Figure~\ref{fig:model}, there is a natural representation of ADT models in terms of directed graphs, and we will occasionally refer to this representation in the sequel.
\begin{definition}[Network Concatenation]
The $L$-user concatenated model $(m,n) \times (\tilde{m},\tilde{n})$ is constructed from the $L$-user $(m,n)$ model and the $L$-user $(\tilde{m},\tilde{n})$
by merging Transmitter $\ell$ from the $L$-user $(m,n)$ model with Transmitter $\ell$ from the $L$-user $(\tilde{m},\tilde{n})$ model, and by also merging
Receiver $\ell$ from the $L$-user $(m,n)$ model with Receiver $\ell$ from the $L$-user $(\tilde{m},\tilde{n})$ model,  for $\ell=1, 2, \ldots, L.$
As a shorthand, we will also use the notation $(m,n)^k = (m,n) \times (m,n) \times \cdots \times (m,n),$ where there are $k$ terms, i.e., the concatenation of $k$ $L$-user $(m,n)$ models.
\end{definition}
Note that it is straightforward to see that network concatenation is both commutative and associative, e.g., $(m,n) \times (\tilde{m},\tilde{n})$
is exactly the same model as $(\tilde{m},\tilde{n}) \times (m,n).$
\begin{definition}[Network Decomposition]
We say that the $L$-user $(m,n)$ model can be {\em decomposed into the $L$-user concatenated model}  $(m_1,n_1) \times (m_2, n_2),$
denoted as $(m,n) \longrightarrow (m_1,n_1) \times (m_2, n_2),$ if the directed graph corresponding to the $L$-user $(m,n)$ model
is isomorphic to the directed graph corresponding to the $L$-user concatenated model  $(m_1,n_1) \times (m_2, n_2).$
\end{definition}
An example of a network decomposition is given in Figure~\ref{fig:decomposition_example}. The figure graphically proves the fact that, in our just defined notation,
$(2,7) \longrightarrow (0,1)^3 \times (1,2)^2.$ The following theorem establishes a number of more general statements of this type.

\begin{theorem}[Network Decomposition]
\label{thm:networkdecomposition}
The $L$-transmitter $L$-receiver $(m,n)$ network where $m \neq n,$
can be decomposed into the following subnetworks:
\begin{enumerate}
\item[$(1)$]   For any $k\in {\mathbb Z}^+,$
\begin{align*}
(km,kn) \longrightarrow (m,n)^k =  (m,n) \times (m,n) \times \cdots \times (m,n).
\end{align*}
\item[$(2)$]   $(2m+1, 2n+1) \longrightarrow (m,n) \times (m+1, n+1)$.
\item[$(3)$]  For the arbitrary $(m, n)$ model,
\small
\begin{align}
\label{Eq-gap1decomp}
(m, n) \longrightarrow  \left\{
         \begin{array}{ll}
           (r, r+1)^{n-m-a} \times (r+1, r+2)^a, & \hbox{$m < n$;} \\
           (r+1, r)^{m-n-a} \times (r+2, r+1)^a, & \hbox{$m > n$.}
         \end{array}
       \right.
\end{align}
\normalsize
where
\begin{align}
\begin{split}
\label{eq:ra}
r  =   \left\lfloor  \frac{\min\{ m,n \}}{|n-m|} \right\rfloor, \;a  =   \min \{m, n\} \mod |n-m|.
\end{split}
\end{align}
\end{enumerate}
\end{theorem}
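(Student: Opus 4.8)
The plan is to prove all three decompositions from a single mechanism: a \emph{stride-$t$ interleaving} of bit levels that simultaneously block-diagonalizes the channel shift matrices. Throughout I would assume $m < n$ (the case $m > n$ is the mirror image obtained by swapping the roles of transmitters and receivers), so that $q = n$, every direct link $i \to i$ acts as $\mathbf{G}^{q-n} = \mathbf{I}$, and every cross link $i \to j$ ($i \neq j$) acts as $\mathbf{G}^{q-m} = \mathbf{G}^{\Delta}$ with $\Delta := n-m$. Because the symmetric $L$-user network has only these two link types, a single permutation of the $q$ coordinates that brings both $\mathbf{G}^{0}$ and $\mathbf{G}^{\Delta}$ to block-diagonal form will, when applied uniformly at every transmitter and every receiver, split the entire network into parallel non-interacting subnetworks.

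First I would establish the core lemma. Fix an integer $t$ with $t \mid \Delta$, and let $\Pi$ be the permutation of $\{1,\dots,q\}$ that groups coordinates by their residue class modulo $t$, listing the indices within each class in increasing order. Since shifting an index by $\Delta$ preserves its residue modulo $t$ (as $t \mid \Delta$), the matrix $\Pi\,\mathbf{G}^{\Delta}\,\Pi^{-1}$ is block-diagonal, with one block per residue class; the block associated with class $s$ is exactly the $q_s \times q_s$ shift matrix raised to the power $\Delta/t$, where $q_s$ is the number of indices in $\{1,\dots,q\}$ lying in class $s$. The identity $\mathbf{G}^{0}$ is trivially block-diagonal under the same $\Pi$ with the same block sizes $q_s$. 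Setting $\tilde X_i = \Pi X_i$ and $\tilde Y_j = \Pi Y_j$ then turns the received-signal equations~\eqref{eq:receivedsignals} into a direct sum over $s$ of independent subnetworks, where subnetwork $s$ has max-level $q_s$, direct shift $0$, and cross shift $\Delta/t$, i.e. it is the symmetric $L$-user $(q_s - \Delta/t,\; q_s)$ network.

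With the lemma in hand, each part is an instantiation with the appropriate stride. For $(1)$, the network $(km,kn)$ has $\Delta = k(n-m)$; choosing $t=k$ gives $t \mid \Delta$, all $k$ residue classes have equal size $q_s = n$, and each subnetwork has cross shift $\Delta/k = n-m$, yielding $k$ copies of $(m,n)$. For $(2)$, the network $(2m+1,2n+1)$ has $q = 2n+1$ and $\Delta = 2(n-m)$; choosing $t=2$, the odd-indexed class contains $n+1$ levels and the even-indexed class contains $n$ levels, both with cross shift $\Delta/2 = n-m$, producing the $(m+1,n+1)$ and $(m,n)$ subnetworks. For $(3)$ I would choose the stride $t = \Delta = n-m$, so that every subnetwork has cross shift $1$ and is therefore a gap-$1$ elementary network. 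Writing $n = (r+1)\Delta + a$ with $0 \le a < \Delta$ (equivalently $m = r\Delta + a$, which matches the definitions of $r$ and $a$ in~\eqref{eq:ra}), a residue count shows that exactly $a$ classes contain $r+2$ levels and the remaining $\Delta - a = n-m-a$ classes contain $r+1$ levels; these are precisely the $(r+1,r+2)$ and $(r,r+1)$ subnetworks claimed.

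I expect the main obstacle to be rigor rather than calculation. The substantive step is justifying that simultaneous block-diagonalization of all $L^2$ transfer matrices by one coordinate permutation genuinely produces parallel subnetworks: one must verify that the \emph{same} partition $\{q_s\}$ governs the transmit and receive sides, so that block $s$ at every transmitter feeds only block $s$ at every receiver, and that a coordinate relabeling is lossless in both the average-error and zero-error frameworks, so that computing capacities add across the decomposition. The remaining work --- checking $t \mid \Delta$ in each case and the residue-class counting underlying part $(3)$ --- is elementary arithmetic that follows directly from~\eqref{eq:ra}.
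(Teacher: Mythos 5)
Your proposal is correct and takes essentially the same route as the paper's proof (Appendix C): the paper's graph coloring that assigns level $p$ the color $(p-1) \bmod t$, with $t = k$, $2$, and $|n-m|$ for the three parts respectively, is exactly your stride-$t$ residue-class partition, and your residue count for part $(3)$ reproduces the paper's ``tedious but straightforward'' calculation. Your single block-diagonalization lemma merely packages algebraically, via conjugation of the shift matrices by one interleaving permutation, what the paper asserts ``by inspection,'' so it is the same argument in a somewhat more explicit and unified form.
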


\begin{figure}[t]
\begin{center}
{\epsfig{figure=./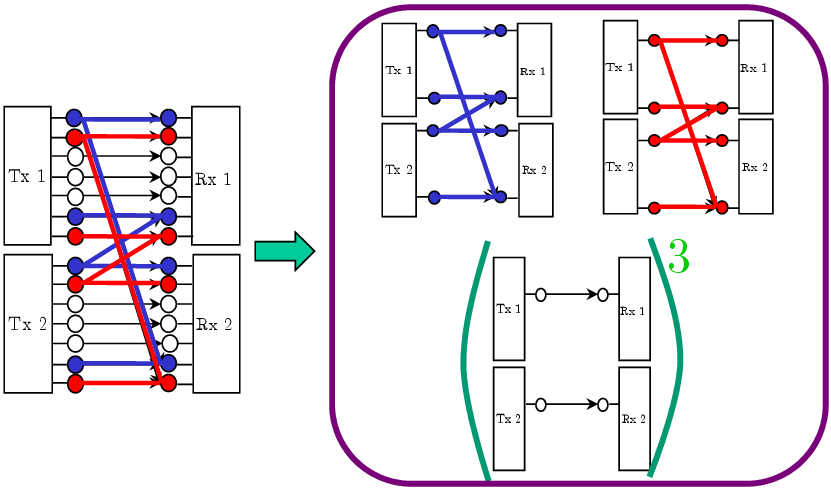, angle=0, width=0.47\textwidth}}
\end{center}
\caption{A network decomposition example of an $(m,n)=(2,7)$ model. From~\eqref{Eq-gap1decomp}, $r=0$ and $a=2$; hence, the decomposition is given by $(2,7) \longrightarrow (0,1)^3 \times (1,2)^2$. } \label{fig:decomposition_example}
\end{figure}

The proof is given in Appendix~\ref{app:proof_NDT}. Here we provide a proof idea with an $(m,n,L)=(2,7,2)$ example, illustrated in Fig.~\ref{fig:decomposition_example}. The idea is to use graph coloring with $|n-m|=5$ colors, identified by integers $\{ 0, 1, 2,3, 4 \}$. At transmitter 1, assign to level $1$ and level $6$ ($=1+ |n-m|$) the color $0$ (blue color in this example). Use exactly the same rule to color the levels of transmitter 2 and receivers 1 and 2. The blue-colored graph represents an independent graph of model $(1,2)$.
Next we assign the color 1 (red color in this example) to level $2$ and level $7$ ($=2 + |n-m|$), for all transmitters and receivers. We then obtain another independent graph of model $(1,2)$ and are left with model $(0,3)$. Obviously the model $(0,3)$ is decomposed into $(0,1)^3$. Therefore, we get $(2,7) \longrightarrow (1,2)^2 \times (0,1)^3$.

In order to prove the achievability in Theorem~\ref{thm:sym_computingcapacity},
we will in particular leverage item (3) in Theorem~\ref{thm:networkdecomposition}.
That is, the decomposition of any (symmetric) network into ``gap-1'' models, i.e.,
models of the form $(r,r+1)$ or $(r+1,r).$
For those models, we can establish the following computation rates:

\begin{lemma}[$L=2$]
\label{lemma:comprates-gap1models}
The following computation rates are achievable:
\begin{enumerate}
\item[$(1)$] For the model $(0,1)$,  $R_{\sf comp} \geq 0$.
\item[$(2)$] For the model $(1,2)$,  $R_{\sf comp} \geq 1$.
\item[$(3a)$] For the model $(r-1,r)$ with $r \geq 3$,  $R_{\sf comp} \geq \frac{2}{3} r$.
\item[$(3b)$] For the model $(r,r-1)$ with $r \geq 3$,  $R_{\sf comp} \geq \frac{2}{3} r$.
\item[$(4)$] For the model $(r, r)$, $R_{\sf comp} \geq r$.
\end{enumerate}
\end{lemma}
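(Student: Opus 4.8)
The plan is to establish each of the five achievable computing rates in Lemma~\ref{lemma:comprates-gap1models} by exhibiting explicit coding schemes for the elementary ``gap-1'' models, treating the degenerate small cases directly and reducing the remaining cases to the constructions already developed in Section~\ref{sec:achievability}. For item $(1)$, the model $(0,1)$ means $n_{11}=n_{22}=0$ and $n_{12}=n_{21}=1$ (or the symmetric labelling); the direct links from each transmitter to its ``own'' receiver carry zero bit levels, so by the cut-set bound~\eqref{eq:upper1} we have $C_{\sf comp}\le\min\{n_{11},n_{12},n_{22},n_{21}\}=0$, forcing $R_{\sf comp}=0$. For item $(4)$, the model $(r,r)$ is the symmetric case $m=n=r$, which Theorem~\ref{thm:sym_computingcapacity} already handles: the channel forms the mod-$2$ sum by nature at both receivers, so uncoded transmission achieves $R_{\sf comp}=r$. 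These two items are essentially immediate.

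The substantive cases are $(2)$, $(3a)$, and $(3b)$. For item $(2)$, the model $(1,2)$ has $(m,n)=(1,2)$, so $\alpha=\tfrac{1}{2}$, placing it at the boundary of Case~I. First I would invoke the Case~I construction of Section~\ref{sec:achievability} (or equivalently the separation scheme of Remark~\ref{remark:Separation}, which is optimal at $\alpha=\tfrac12$) to achieve $R_{\sf comp}=m=1$, matching the cut-set bound $\min(m,n)=1$. For items $(3a)$ and $(3b)$, the models $(r,r+1)$ and $(r+1,r)$ with $r\ge 2$ satisfy $\tfrac{2}{3}\le\alpha=\tfrac{r}{r+1}<1$, so they fall squarely into Case~II. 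Here I would apply the general vector-coding construction~\eqref{eq:V1V2_gen}--\eqref{eq:P1P2_gen} together with Lemma~\ref{lemma:symm_achiev}, which guarantees that ${\bf B}_1$ and ${\bf B}_2$ are full rank and hence delivers $R_{\sf comp}=\tfrac{2n}{3}=\tfrac{2}{3}(r+1)$. The case $(r+1,r)$ follows from $(r,r+1)$ by the mirror symmetry noted at the start of Section~\ref{sec:achievability}, swapping the roles of the two transmitters.

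The one point requiring care is the range restriction $r\ge 2$ in items $(3a)$--$(3b)$ and why $(1,2)$ is singled out separately in item $(2)$. The Case~II construction of~\eqref{eq:P1P2_gen} implicitly requires the block dimensions $3(n-m)$ and $3m-2n$ to be nonnegative and the coordinate-vector index ranges (e.g.\ from $(n-m)+1$ to $2m-n$) to be well-defined, which demands $\tfrac{2}{3}\le\alpha<1$ strictly; for the gap-$1$ models $n-m=1$ this gives $3m-2n=2n-3=2r-1$ and $3(n-m)=3$, so the construction is nondegenerate precisely when $r\ge 2$. At $r=1$, i.e.\ the model $(1,2)$, one has $\alpha=\tfrac12$ exactly and the Case~II parameters collapse, which is why that model must instead be covered by the Case~I argument giving $R_{\sf comp}=1$ rather than the formula $\tfrac{2}{3}(r+1)$.

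The main obstacle I anticipate is verifying that the generic construction genuinely specializes correctly at the gap-$1$ boundary: one must check that plugging $n-m=1$ into~\eqref{eq:V_gen} and~\eqref{eq:P1P2_gen} produces matrices ${\bf V}_1,{\bf V}_2\in\mathbb{F}_2^{3(r+1)\times(r+1)}$ whose index ranges do not overlap or run out of bounds, and that the full-rank conclusion of Lemma~\ref{lemma:symm_achiev} still applies in this extremal regime. Once this specialization is confirmed, the achievability of all five rates follows, and these rates are in fact the computing capacities of the building blocks by the matching converse in Theorem~\ref{thm:upperbound}.
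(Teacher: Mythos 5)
Your proposal is correct, but it takes a genuinely different route from the paper's own proof---in fact, the route the paper itself flags as possible (``this lemma can be proved via the geometric approach in Section~\ref{sec:achievability}'') before deliberately choosing another. You obtain items $(3a)$--$(3b)$ by specializing the general Case~II construction \eqref{eq:V1V2_gen}, \eqref{eq:V_gen}, \eqref{eq:P1P2_gen} together with the full-rank guarantee of Lemma~\ref{lemma:symm_achiev} to $(m,n)=(r,r+1)$, observing that $\alpha = \frac{r}{r+1}\in\left[\frac{2}{3},1\right)$ for $r\ge 2$. The paper's Appendix~\ref{app:proof_comprates-gap1models} instead gives a largely self-contained explicit-code proof: it splits item $(3a)$ according to $r \bmod 3$; for $r=3\ell-1$ it exhibits a one-shot scalar code (zero out every third level and swap the order of each symbol pair at transmitter 2), and for $r=3\ell$ and $r=3\ell+1$ it concatenates such a scalar code with explicit vector codes for just the two base models $(3,4)$ and $(4,5)$, viewed over three channel uses as $(9,12)$ and $(12,15)$ via the decomposition theorem---only those two base codes being imported from Section~\ref{sec:achievability}. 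Your approach buys brevity given that Lemma~\ref{lemma:symm_achiev} is already proven; the paper's buys a proof in which the geometric machinery is confined to two small models, consistent with its remark that neither proof is clearly simpler. One slip in your side remark: with $n-m=1$ you write $3m-2n = 2n-3 = 2r-1$, but in fact $3m-2n = r-2$ (the quantity $2n-3=2r-1$ is something else); fortunately the correct value $r-2\ge 0$ is exactly what yields your stated conclusion that the construction is nondegenerate precisely when $r\ge 2$ (at $r=2$ the ${\bf P}_\ell$ blocks are empty and ${\bf V}_\ell={\bf V}$, which still satisfies the rank condition), so the error is harmless to the argument.
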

\begin{proof}
See Section~\ref{sec:achievability_elementarynetworks}.
\end{proof}

By Theorem~\ref{thm:networkdecomposition} and Lemma~\ref{lemma:comprates-gap1models}, we can now readily prove the achievability of Theorem~\ref{thm:sym_computingcapacity}. By symmetry, we focus on the case of $m \leq n$. The other case of $m \geq n$ is a mirrored image in which the roles of transmitters 1 and 2 are swapped.

For the case of $\alpha=1$, $R_{\sf comp} \geq n$ by Item (4) in Lemma~\ref{lemma:comprates-gap1models}.
For the case of $0 \le \alpha < \frac{1}{2}$, $r=0$ and $a=m$ in~\eqref{eq:ra}; hence, the decomposition is given by
$(m, n) \longrightarrow  (0, 1)^{n-2m} \times (1, 2)^m$.
Thus, using Lemma~\ref{lemma:comprates-gap1models}, the computation rate is
$R_{\sf comp} \geq  0 \cdot (n-2m) + 1 \cdot m= m$.
Next, consider the case of $\frac{1}{2} \le \alpha < \frac{2}{3}$.
Applying the decomposition~\eqref{Eq-gap1decomp},
we find that in this case, $r=1$ and $a=2m-n$:
$(m, n)  \longrightarrow   (1, 2)^{2n-3m} \times (2, 3)^{2m-n}$.
Thus, using Lemma~\ref{lemma:comprates-gap1models}, the computation rate is
$R_{\sf comp} \geq  1 \cdot (2n-3m) + 2 \cdot (2m-n) = m$.
Finally, consider the case of $\alpha \ge \frac{2}{3}$.
Applying the decomposition~\eqref{Eq-gap1decomp},
we find that in this case, $r\ge 2$. So we get
\begin{align*}
R_{\sf comp} & \geq \frac{2}{3} (r+1) (n-m-a) + \frac{2}{3} (r+2) a \\
&= \frac{2}{3} \left \{   r (n-m) + a   + (n-m)\right \} \\
&\overset{(a)}{=} \frac{2}{3} \left \{  m  + (n-m)\right \} = \frac{2}{3} n.
\end{align*}
where $(a)$ is due to~\eqref{eq:ra}. This completes the proof.

\subsection{Proof of Lemma~\ref{lemma:comprates-gap1models}}
\label{sec:achievability_elementarynetworks}

We note that Items (1), (2) and (4) are obvious, and Item (3b) follows from Item (3a),
since without loss of generality, for the multicast problem with $L=2$ users considered here,
the case $(m,n)$ and the case $(n,m)$ are mirror images of each other in which the roles of
transmitters 1 and 2 are swapped.
We here provide an explicit proof of Item (3a), split into three cases.
For notation, the symbols of Transmitter 1 will be denoted by $a_1, a_2, a_3, \ldots$ and the symbols of Transmitter 2 by $b_1, b_2, b_3 \ldots.$
The goal of both receivers is then to recover the modulo sums $a_1 \oplus b_1, a_2 \oplus b_2, \ldots.$
Moreover, we will find it convenient to collect the channel inputs used by transmitters into length-$n$ vectors denoted by ${\bf x}_1$ and ${\bf x}_2,$ respectively.

{\em (i)} The case $r=3 \ell,$ i.e., the $(3 \ell-1, 3 \ell)$ model:
Let us start with the simplest case of $(2,3)$ model.
At Transmitter 1, we send ${\bf x}_{1}Ê= ( a_1, a_2, 0)$
and at Transmitter 2, ${\bf x}_{2}Ê= ( b_2, b_1, 0).$
Clearly, both receivers learn both modulo sums. This code can be extended to all models for the form $(3\ell-1,3\ell)$
(for all positive integers $\ell$), as follows.
We set
$X_{1,3k-2} = a_{2k-1}, X_{1,3k-1} = a_{2k}, X_{1,3k}=0$
and
$X_{2,3k-2} = b_{2k}, X_{2,3k-1} = b_{2k-1}, X_{2,3k}=0,$
for $k= 1, 2, \ldots, \ell.$
Each receiver can reconstruct all $2\ell$ sums $a_k \oplus  b_k$
and thus, the computation rate is $2\ell = \frac{2}{3}(3\ell),$ as claimed.\footnote{
In the solution for the $(2,3)$ model, the symbols $(a_1,b_1)$ at receiver 2 share one-dimensional linear subspace spanned by $(0,1,0)$. In other words, the linear subspace with respect to $a_1$ is \emph{aligned} with the subspace w.r.t $b_1$. In this sense, it is an instance of the important concept of \emph{interference alignment}~\cite{Mohammad:IT08,CadambeJafar:IT08} which has shown the great potential for a variety of applications~\cite{CadambeJafar:IT08,Suh:Allerton,Suh:TRCOM,Wu:ISIT09, KumarRamchandran_MSR:IT12, Suh:ER-MDS,Das:ISIT10,RamakrishnanJafar:Allerton}. But the distinction w.r.t our problem comes from the purpose of alignment. In our problem, the aim of alignment is to form a desired function while minimizing the signal subspace occupied by the source symbols.}

\begin{figure*}
\begin{center}
{\epsfig{figure=./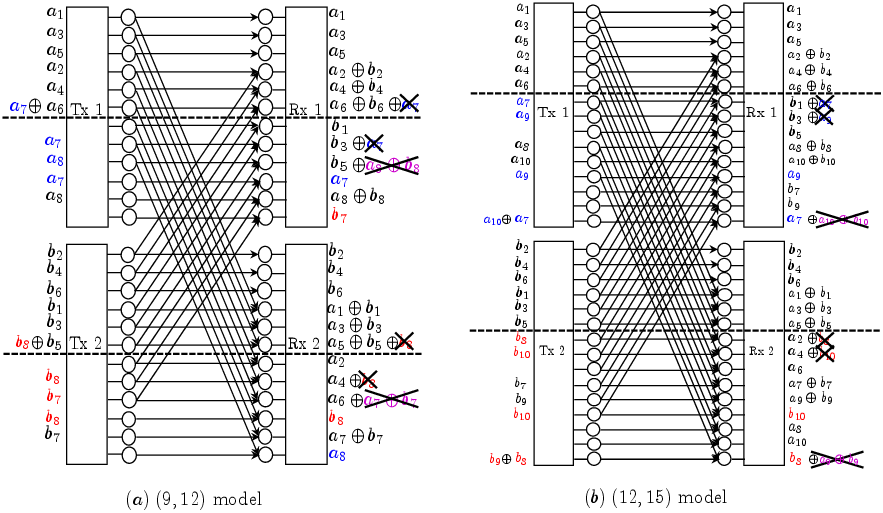, angle=0, width=0.8\textwidth}}
\end{center}
\caption{Explicit code for the $(9,12)$ model and the $(12,15)$ model.} \label{fig:explicit_code}
\end{figure*}

{\em (ii)} The case $r=3 \ell+1,$ i.e., the $(3 \ell, 3 \ell + 1)$ model:
Consider the $(3,4)$ model. Here we consider a code for the $(9,12)$ model (which can be implemented over three channel uses of the $(3,4)$ model). The code is given in Fig.~\ref{fig:explicit_code}$(a)$.
This code enables both receivers to recover all $8$ computations,
which means a computation rate of $8/3 = \frac{2}{3}\cdot4$ per channel use, as claimed in the lemma.

Now, consider the general $(3\ell,3\ell+1)$ model, for any positive integer $\ell\ge2.$ In this case, we find that the $(6,9)$ model acts as a building block, and hence we will first develop a code for the $(6,9)$ code.

{\em The $(6,9)$ code:}  This network falls into the $(2,3)$ model category as it can be decomposed into three orthogonal $(2,3)$ networks.
Since it will have a prominent role in the remainder of the proof, we explicitly
write out the code. Here, each transmitter has 6 symbols.
At Transmitter 1, we send ${\bf x}_{1}Ê= ( a_1, a_3, a_5, a_2, a_4, a_6, 0, 0, 0)$
and at Transmitter 2, ${\bf x}_{2}Ê= ( b_2, b_4, b_6, b_1, b_3, b_5, 0, 0, 0).$
It is easy to verify that all 6 modulo sums can be recovered.
The key to observe is that the last 3 inputs at both transmitters are all zero.


Now, let us go back to the general case: $(3\ell,3\ell+1)$ model, $\ell\ge2.$ Again, we code over $3$ channel uses.
By network decomposition, any code for the $(9\ell, 9\ell+3)$ model
can be implemented over $3$ channel uses of the $(3\ell,3\ell+1)$ model.
But for the $(9\ell, 9\ell+3)$ model,
we {\em split}  the network into multiple parts:
On the first $9$ vertices of each transmitter and receiver, we implement the $(6,9)$ code from above,
giving 6 computations.
But since in that code, neither transmitter uses the last 3 inputs, this leaves the remaining
$9\ell+3-9 = 9(\ell-1)+3$ vertices of the network completely unaffected at all transmitters and receivers.
Hence, we can repeat this step: on the next 9 vertices, again implement the $(6,9)$ code from above,
giving another 6 computations.
We do this step exactly $\ell-1$ times, leading to $6(\ell -1)$ computations.
At that point, we are left with $9\ell+3 - 9(\ell-1) = 12$ vertices
at each transmitter and each receiver.
On these, we implement the code from Fig.~\ref{fig:explicit_code}$(a)$, giving us another 8 computations. This gives a total of $6(\ell-1) + 8 = 2 ( 3\ell +1 )$ computations. Thus, per channel use, the computation rate is $\frac{2}{3} ( 3\ell +1 ),$ as claimed.


{\em (iii)} The case $r=3 \ell+2,$ i.e., the $(3 \ell + 1 , 3 \ell + 2)$ model:
Consider the $(4,5)$ model. Here we consider a code for the $(12,15)$ model (which can be implemented over three channel uses
of the $(4,5)$ model). The code is given in Fig.~\ref{fig:explicit_code}$(b)$.
This code enables both receivers to recover all $10$ computations,
which means a computation rate of $10/3 = \frac{2}{3}\cdot5$ per channel use, as claimed in the lemma.

Now, consider the general $(3\ell+1,3\ell+2)$ model, for any positive integer $\ell\ge2.$
Again, we code over $3$ channel uses.
By network decomposition, any code for the $(9\ell+3, 9\ell+6)$ model
can be implemented over $3$ channel uses of the $(3\ell+1,3\ell+2)$ model.
But for the $(9\ell+3, 9\ell+6)$ model,
we {\em split}  the network into multiple parts:
On the first $9$ vertices of each transmitter and receiver, we implement the $(6,9)$ code from above,
giving 6 computations.
But since in that code, neither transmitter uses the last 3 inputs, this leaves the remaining
$9\ell+6-9 = 9(\ell-1)+6$ vertices of the network completely unaffected at all transmitters and receivers.
Hence, we can repeat this step: on the next 9 vertices, again implement the $(6,9)$ code from above,
giving another 6 computations.
We do this step exactly $\ell-1$ times, leading to $6(\ell -1)$ computations.
At that point, we are left with $9\ell+6 - 9(\ell-1) = 15$ vertices
at each transmitter and each receiver.
On these, we implement the code from Fig.~\ref{fig:explicit_code}$(b)$,
giving us another 10 computations. This gives a total of $6(\ell-1) + 10 = 2 ( 3\ell +2 )$ computations.
Thus, per channel use, the computation rate is $\frac{2}{3} ( 3\ell +2 ).$

\section{Proof of Theorem~\ref{thm:Luser_computingcapacity}}
\label{sec:L-L-symmetric}

\subsection{Achievability Proof}
\label{sec:Luser_achievability}

The idea is to combine the network decomposition in Theorem~\ref{thm:networkdecomposition} and achievability proof for elementary subnetworks.

\begin{figure}[t]
\begin{center}
{\epsfig{figure=./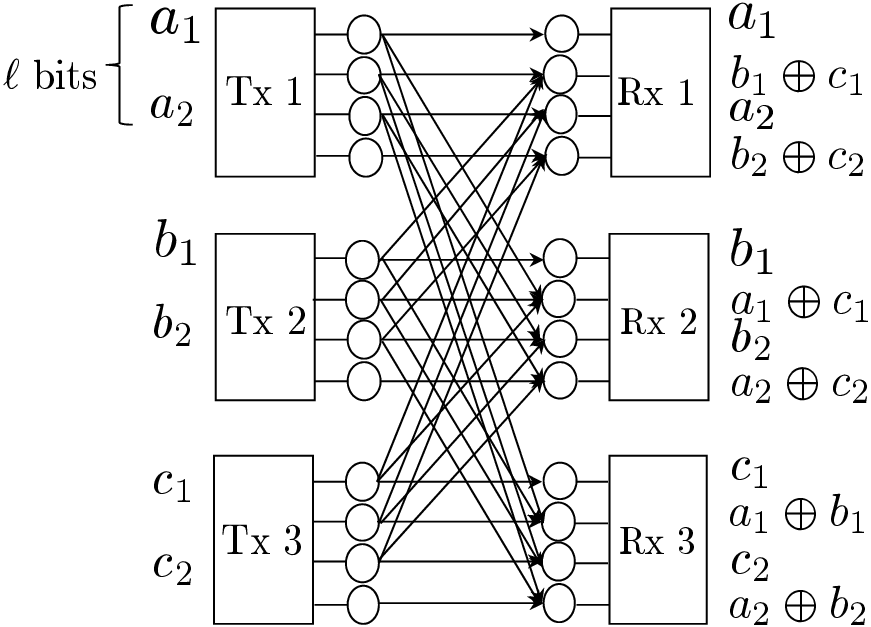, angle=0, width=0.35\textwidth}}
\end{center}
\caption{Achievable scheme for the $(2\ell-1,2\ell)$ model.
} \label{fig:Luser_even}
\end{figure}

\begin{lemma}[$L \geq 3$]
\label{lemma:Lusercomprates-gap1models}
The following computation rates are achievable:
\begin{enumerate}
\item[$(1)$] For the model $(0,1)$ or $(1,0)$,  $R_{\sf comp} \geq 0$.
\item[$(2a)$] For the model $(r-1,r)$ with $r \geq 2$,  $R_{\sf comp} \geq \frac{1}{2} r$.
\item[$(2b)$] For the model $(r,r-1)$ with $r \geq 2$,  $R_{\sf comp} \geq \frac{1}{2} r$.
\item[$(3)$] For the model $(r, r)$, $R_{\sf comp} \geq r$.
\end{enumerate}
\end{lemma}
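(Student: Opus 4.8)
The plan is to dispose of items (1) and (3) immediately, handle (2a)/(2b) for even $r$ by a one-shot scheme, and reduce the remaining work to a single hard base case. For the $(r,r)$ model we have $q=r$ and every link gain equals $\mathbf{G}^{q-r}=\mathbf{I}$, so each receiver observes exactly $\bigoplus_j X_j$; uncoded transmission $X_\ell=S_\ell$ then delivers $\bigoplus_j S_j$ at every receiver, giving $R_{\sf comp}=r$. For the $(0,1)$ (or $(1,0)$) model, $R_{\sf comp}=0$ is trivial. Thus all the content is in the gap-1 models $(r-1,r)$ and $(r,r-1)$ with $r\ge 2$.

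First I would write the bit-level equations. Writing $X_j=(x_{j,1},\dots,x_{j,r})^t$, for $(r-1,r)$ the direct gain is $\mathbf{I}$ and each cross gain is $\mathbf{G}$, so $(Y_\ell)_1=x_{\ell,1}$ and $(Y_\ell)_i=x_{\ell,i}\oplus\bigoplus_{j\ne\ell}x_{j,i-1}$ for $i\ge 2$; for $(r,r-1)$ the roles of $\mathbf{I}$ and $\mathbf{G}$ are exchanged, giving $(Y_\ell)_i=x_{\ell,i-1}\oplus\bigoplus_{j\ne\ell}x_{j,i}$. When $r$ is even, I send each of the $r/2$ source symbols $s_{j,k}$ on the two consecutive transmit levels $2k-1$ and $2k$; then the even output level $2k$ reads $s_{\ell,k}\oplus\bigoplus_{j\ne\ell}s_{j,k}=\bigoplus_j s_{j,k}$ at \emph{every} receiver, with no interference. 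This is exactly the scheme of Fig.~\ref{fig:Luser_even} and yields $R_{\sf comp}=r/2$ in a single channel use (the same pairing works verbatim for $(r,r-1)$ since the sum again lands on level $2k$).

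For odd $r$ I would argue by a base case plus induction. The key structural fact is that each computed sum consumes two consecutive transmit levels (the cross-links alone only ever deliver $\bigoplus_{j\ne\ell}(\cdot)$, so a full sum must be completed by the direct term one level down), and an odd number of levels admits only $(r-1)/2$ disjoint pairs — leaving the scheme exactly half a sum short. I close this gap in two steps. Step (i): an explicit two-time-slot code for the base model $(2,3)$ (and its mirror $(3,2)$). Concretely, transmitter $j$ sends $X_j[1]=(s_{j,1},s_{j,2},s_{j,2})^t$ and $X_j[2]=(s_{j,3},s_{j,2},s_{j,3})^t$; one checks that output level $3$ of slot $1$ is the clean aligned sum $\bigoplus_j s_{j,2}$, after which each receiver uses this sum together with its decoded self-terms $s_{\ell,1},s_{\ell,3}$ as side information to peel off its own nuisance $s_{\ell,2}$, and then recovers $\bigoplus_j s_{j,1}$ and $\bigoplus_j s_{j,3}$ — three sums in two slots, i.e. $R_{\sf comp}=3/2$. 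Step (ii): an inductive step $(r-1,r)\to(r+1,r+2)$ that appends one fresh symbol $p_j$ on the two new bottom levels $r+1,r+2$. Since these levels feed only output level $r+2$, which becomes a new clean sum $\bigoplus_j p_j$, while every lower output is unchanged, the step adds exactly $1$ to the rate and $2$ to $r$; iterating from the bases $(1,2)$ and $(2,3)$ covers all $r\ge 2$ of both parities, and the mirrored construction handles $(r,r-1)$.

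The hard part is the odd base case. Everything for even $r$ and the inductive append step is essentially bookkeeping once the equations are written, but reaching exactly $r/2$ for odd $r$ intrinsically requires coding jointly across the two time slots and exploiting the side-information/alignment idea of Section~\ref{sec:achievability}: one must verify that the resulting linear system over $\mathbb{F}_2$ is full rank, equivalently that every receiver can sequentially cancel its own-source nuisance terms using already-decoded sums. I would therefore spend the bulk of the proof certifying decodability of the $(2,3)$ and $(3,2)$ codes, noting that for $L\ge 3$ the cases $m<n$ and $m>n$ are not symmetric (cf. the remark after Theorem~\ref{thm:networkdecomposition}) and so each mirror must be checked, and then invoking the append step to conclude for all $r$.
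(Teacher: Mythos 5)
Your proposal is correct, and for the hard (odd-$r$) case it takes a genuinely different route from the paper. The paper's proof of items $(2a)$/$(2b)$ handles $r=2k$ with the one-shot scheme of Fig.~\ref{fig:Luser_even} (symbols on odd transmit levels only; each receiver XORs its clean self-symbol on an odd output level with the cross-sum on the next even level), and handles $r=2k+1$ with a single uniform two-time-slot scheme for \emph{all} odd $r$ (Fig.~\ref{fig:Luser_odd}): slot 1 adds one extra symbol on the last even level, creating one conflict per receiver, and slot 2 re-sends that symbol on level 1 while shifting fresh symbols down, with already-received functions and self-symbols used as side information, giving $\frac{k+1+k}{2}=\frac{r}{2}$. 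You instead certify one explicit two-slot base code for $(2,3)$ and then induct $(r-1,r)\to(r+1,r+2)$ by appending a fresh symbol on the two new bottom levels. I verified both ingredients: your $(2,3)$ code does decode (slot-1 level 3 yields the clean sum $\bigoplus_j s_{j,2}$; each receiver then recovers $s_{\ell,2}$ from slot-2 level 3 using its self-symbol $s_{\ell,3}$, and peels its nuisance terms from the level-2 observations), and the append step is valid, so the induction from bases $(1,2)$ and $(2,3)$ covers all $r\ge2$. Your approach buys modularity and is closer in spirit to the paper's own decomposition philosophy --- it is exactly the strategy the paper uses for $L=2$ in Appendix~\ref{app:proof_comprates-gap1models}, where Lemma~\ref{lemma:comprates-gap1models} is reduced to the $(3,4)$ and $(4,5)$ base codes --- whereas the paper's $L\ge3$ proof buys a single closed-form description with no induction. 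Two small repairs are needed, neither fatal. First, your claim that the appended levels ``feed only output level $r+2$'' is not literally true: transmit level $r+1$ also hits output $r+1$ directly, and the old code's level-$r$ cross term (which fell off the bottom in the $(r-1,r)$ model) now lands on output $r+1$ as well; the argument survives because output $r+1$ is simply discarded and outputs $1,\ldots,r$ are unchanged, so you should phrase it as ``every output the old code uses is unchanged, and output $r+2$ carries the clean new sum.'' Second, you promise but never display the $(3,2)$ mirror base code; since for $L\ge3$ the cases $m<n$ and $m>n$ are genuinely asymmetric (as you correctly note), it should be exhibited --- e.g. $X_j[1]=(s_{j,2},s_{j,2},s_{j,1})^t$ and $X_j[2]=(s_{j,3},s_{j,1},s_{j,3})^t$ works: slot-1 level 2 gives $\bigoplus_j s_{j,2}$ cleanly, levels 1 and 2 together give $s_{\ell,2}$, slot-2 level 3 gives $s_{\ell,1}$, and the remaining sums follow by peeling. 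The paper itself is equally terse here (``the same strategy can be applied \ldots\ as well as the $(2b)$ model''), so on this point your rigor is at worst comparable; your even-$r$ scheme (repetition on consecutive level pairs so the sum forms directly on even outputs) is a trivially equivalent variant of the paper's, spending two levels per computed sum either way.
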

\begin{proof}
The items $(1)$ and $(3)$ are straightforward.

{\em (i)} $(r-1,r)$ model: We consider two cases: $r=2 \ell$ and $r=2\ell+1$. Fig.~\ref{fig:Luser_even} shows an achievable scheme when $r=2\ell=2 \cdot 2$ and $L=3$. Each transmitter uses odd-numbered vertices to send $\ell$ symbols. The special structure of symmetric networks enables each receiver to get clean symbols on odd-numbered vertices while receiving partially-aligned functions on even-numbered ones. For example, receiver 1 gets $(a_1,a_2)$ on the first and third vertices; $(b_1 \oplus c_1,b_2 \oplus c_2)$ on the second and fourth vertices. Note that two resource levels are consumed to compute one desired function. Therefore, this gives a computation rate of $\frac{1}{2} r$. Obviously this can be applied to an arbitrary value of $L$.

\begin{figure}[t]
\begin{center}
{\epsfig{figure=./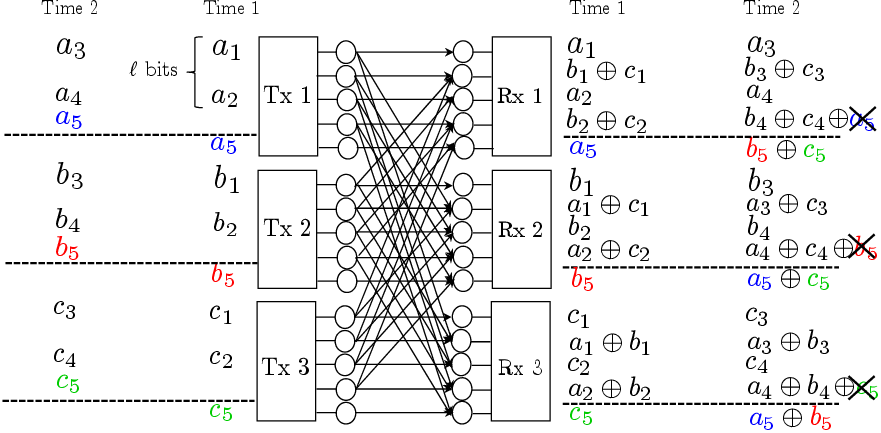, angle=0, width=0.47\textwidth}}
\end{center}
\caption{Achievable scheme for the $(2\ell, 2 \ell+1)$ model.
} \label{fig:Luser_odd}
\end{figure}

An explicit scheme for the case of $r=2 \ell+1=2 \cdot 2 +1$ and $L=3$ is given in Fig.~\ref{fig:Luser_odd}. If we followed the same approach as in the case of $r=2 \ell$, each receiver would be left with one empty vertex in the last level. In this example, receiver 1 would get $(a_1, b_1 \oplus c_1, a_2, b_2 \oplus c_2)$ on the 1st, 2nd, 3rd, and 4th vertices, while the last bottom vertex is empty. In order to make an efficient resource utilization, we invoke an idea of vector coding. First we implement the same code in time 2. Now each transmitter will exploit the two empty vertices to provide one more computation. In time 1, on the last vertex, each transmitter sends a new fresh symbol. In this example, transmitters 1, 2, 3 send $a_5$, $b_5$, $c_5$ respectively. However, this transmission does not give the computation of $a_5 \oplus b_5 \oplus c_5$ yet. In time 2, on the second last vertex, each transmitter re-sends the symbol that was sent on the last vertex in time 1. Receiver 1 can then get $b_5 \oplus c_5$, thus achieving $a_5 \oplus b_5 \oplus c_5$. While the transmission of $a_5$ causes interference to $b_4 \oplus c_4$ on the second last vertex, we can resolve this conflict by using the $a_5$ (that was already received in time 1) as side information. Similarly the desired function can be computed at the other receivers. This way, we can achieve $\frac{2\ell+ 1 }{2} = \frac{r}{2}$. The same strategy can be applied to an arbitrary value of $L$.

{\em (ii)} $(r,r-1)$ model: Consider two cases: $r=2 \ell$ and $r=2\ell+1$. For both cases, the coding strategies are the same as those in the $(r-1,r)$ model. In the first case, each transmitter sends $\ell$ symbols only on odd-numbered vertices. One can then readily see that all the receivers can compute $\ell$ sums. In the second case, we use two time slots. In each time, on the upper $2 \ell $ vertices of transmitters and receivers, we implement the $(2 \ell, 2\ell -1)$ code from above. On top of this, each transmitter sends one additional fresh symbol to provide one more computation. Specifically each transmitter sends the $(\ell +1)$th symbol on the last vertex in time 1 and re-sends the symbol on the second last vertex in time 2. One can check that each receiver can compute $2\ell +1$ sums, thus yielding a computation rate of $\frac{r}{2}$.
\end{proof}

Using Theorem~\ref{thm:networkdecomposition} and Lemma~\ref{lemma:Lusercomprates-gap1models}, we can now prove the achievability. For the case of $\alpha=1$, $R_{\sf comp} \geq n$ by Item (3) in Lemma~\ref{lemma:Lusercomprates-gap1models}.
For $0 \le \alpha < \frac{1}{2}$,~\eqref{Eq-gap1decomp} gives $r=0$ and $a=m$, thus the decomposition is given by
$(m, n) \longrightarrow  (0, 1)^{n-2m} \times (1, 2)^{m}$.
Therefore, using Lemma~\ref{lemma:Lusercomprates-gap1models}, $R_{\sf comp} \geq  0 \cdot (n-2 m) + 1 \cdot m= m$. Next, consider the case of $\frac{1}{2} \le \alpha < \frac{2}{3}$. Using~\eqref{Eq-gap1decomp}, we find that $r=1$ and $a=2m-n$, hence, the decomposition is given by $(m, n)  \longrightarrow   (1, 2)^{2n-3m} \times (2, 3)^{2m-n}$. Using Lemma~\ref{lemma:Lusercomprates-gap1models}, $R_{\sf comp} \geq  1 \cdot (2n-3m) + \frac{3}{2} \cdot (2m-n) = \frac{1}{2} n$.
Finally, consider the case of $\alpha \ge \frac{2}{3}$.
From~\eqref{Eq-gap1decomp}, we know that $r\ge 2$. So we get
$R_{\sf comp} \geq \frac{1}{2} n$.

The other case of $m>n$ similarly follows. For  $0 \le \alpha < \frac{1}{2}$,~\eqref{Eq-gap1decomp} gives $r=0$ and $a=n$, thus the decomposition is given by
$(m, n) \longrightarrow  (1, 0)^{m-2n} \times (2, 1)^n$. With Lemma~\ref{lemma:Lusercomprates-gap1models}, we get $R_{\sf comp} \geq  0 \cdot (m-2n) + 1 \cdot n= n$. For $\frac{1}{2} \le \alpha < \frac{2}{3}$,~\eqref{Eq-gap1decomp} gives $r=1$ and $a=2n-m$. So the decomposition is given by $(m, n)  \longrightarrow   (2, 1)^{2m-3n} \times (3, 2)^{2n-m}$. Using Lemma~\ref{lemma:Lusercomprates-gap1models}, we can achieve
$R_{\sf comp} \geq  1 \cdot (2m-3n) + \frac{3}{2} \cdot (2n-m) = \frac{1}{2} m$.
Lastly, for $\alpha \ge \frac{2}{3}$,~\eqref{Eq-gap1decomp} gives $r\ge 2$. So we get
$R_{\sf comp} \geq \frac{1}{2} m$.

\subsection{Proof of Upper Bound}
\label{sec:Luser_upperbound}

The upper bound is a generalized version of the $2\times 2$ case bound~\eqref{eq:upper2}. Hence, the bounding technique is almost the same, but it involves complicated notations. Let ${\cal Y}:= \{ Y_{\ell}^N \}_{\ell=1}^L$ be a collection of all the received signals. Let $S:= \bigoplus_{\ell=1}^L S_{\ell}^K$ be the $K$-dimensional desired sum function. Let $\bar{\cal S}_{\ell}:= \{ S_j^K \}_{j=1}^{L} \setminus S_{\ell}^K$ be a collection of all the sources that excludes the $\ell$th user's source. Using these notations and starting with Fano's inequality, we get:

\begin{align*}
&N(  (\log_2 p) (2L-1) R_{\sf comp} - \epsilon_N) \\
&\leq \sum_{ \ell = 1}^L I ( S ; Y_\ell^N) + (L-1) I ( S ; Y_1^N)\\
& \overset{(a)}{\leq} \sum_{ \ell = 1}^L \left[ H ( Y_\ell^N) - H ( Y_\ell^N| S,  \{ Y_{j}^{N} \}_{j=1}^{\ell -1}  ) \right ] + (L-1) I ( S ; Y_1^N) \\
& \overset{(b)}{\leq} \sum_{ \ell = 1}^L  H ( Y_\ell^N)   -   H ( {\cal Y} | S ) + \sum_{ \ell = 2}^{L} I ( S ; Y_1^N,   \bar{\cal S}_{\ell} ) \\
&   \overset{(c)}{=} \sum_{ \ell = 1}^L  H (Y_\ell^N) -   H ( {\cal Y}| S) +  \sum_{ \ell = 2}^{L} H \left (  Y_1^N | \bar{\cal S}_{\ell} \right)  \\
&  \overset{(d)}{=} \sum_{ \ell = 1}^L  H (Y_\ell^N) -   H ( {\cal Y}| S) +  \sum_{ \ell = 2}^{L} H \left (  \{ {\bf G}^{ q- n_{\ell 1}} X_{\ell i} \}_{i=1}^{N} | \bar{\cal S}_{\ell} \right) \\
& \overset{(e)}{\leq} \sum_{ \ell = 1}^L  H ( Y_\ell^N) \leq (\log_2 p) N L \max (m,n)
\end{align*}
where $(a)$ follows from the fact that conditioning reduces entropy; $(b)$ follows from a chain rule and the non-negativity of mutual information;  $(c)$ follows from the fact that ${\bar{\cal S}}_{\ell}$ is independent of $S$ due to the mutual independence of $S_{\ell}^{K}$'s;
 $(d)$ follows from the fact that  $X_{\ell' i}$ is a function of $\bar{\cal{S}}_{\ell}$ $\forall \ell' \neq \ell$;
and $(e)$ follows from $H({\cal Y} |S) \geq  \sum_{ \ell = 2}^{L} H \left (  \{ {\bf G}^{ q- n_{\ell 1}} X_{\ell i} \}_{i=1}^{N} | \bar{\cal S}_{\ell} \right)$ (see Claim~\ref{claim1} below). Hence, $R_{\sf comp} \leq \frac{L}{2L-1} \max (m,n)$.

\begin{claim}
\label{claim1}
\begin{align}
H({\cal Y} |S) \geq  \sum_{ \ell = 2}^{L} H \left (  \{ {\bf G}^{ q- n_{\ell 1}} X_{\ell i} \}_{i=1}^{N} | \bar{\cal S}_{\ell} \right).
\end{align}
\end{claim}
\begin{proof}
\begin{align*}
& H ( {\cal Y}| S) - \sum_{ \ell = 2}^{L} H \left (  \{ {\bf G}^{ q- n_{\ell 1}} X_{\ell i} \}_{i=1}^{N} | \bar{\cal S}_{\ell} \right) \\
& \overset{(a)}{=} H \left( {\cal Y},  \{ {\bf G}^{ q- n_{2 1}} X_{2 i} \}_{i=1}^{N} , \cdots, \{ {\bf G}^{ q- n_{L 1}} X_{L i} \}_{i=1}^{N} | S \right) \\
& \;\; - \sum_{ \ell = 2}^{L} H \left (  \{ {\bf G}^{ q- n_{\ell 1}} X_{\ell i} \}_{i=1}^{N}  | \bar{\cal S}_{\ell} \right)  \\
& \overset{(b)}{\geq}H \left(  \{ {\bf G}^{ q- n_{2 1}} X_{2 i} \}_{i=1}^{N} , \cdots, \{ {\bf G}^{ q- n_{L 1}} X_{L i} \}_{i=1}^{N} | S \right)  \\
& \;\; - \sum_{ \ell = 2}^{L} H \left (  \{ {\bf G}^{ q- n_{\ell 1}} X_{\ell i} \}_{i=1}^{N}  | \bar{\cal S}_{\ell} \right) \\
&\overset{(c)}{=}   \sum_{\ell =2 }^{L} H \left (  \{ {\bf G}^{ q- n_{\ell 1}} X_{\ell i} \}_{i=1}^{N}  \right) - \sum_{ \ell = 2}^{L} H \left (  \{ {\bf G}^{ q- n_{\ell 1}} X_{\ell i} \}_{i=1}^{N}  | \bar{\cal S}_{\ell} \right) \\
& \geq 0
\end{align*}
where $(a)$ follows from the fact that $X_{\ell i}$ is a function of $\{ Y_{ji} \}_{j=1}^{L}$, $\forall \ell, i$ (see Claim~\ref{claim:Luserfunctionalrelation} below); $(b)$ follows from the non-negativity of the entropy and the fact that $(S_2^K,\cdots,S_L^K)$ is independent of $S$; $(c)$ follows from a chain rule, and the fact that  $X_{\ell i}$ is independent of $S$ and  $(X_{(\ell+1)i},\cdots, X_{L i})$'s due to the mutual independence of $S_{\ell}^{K}$'s.
\end{proof}

\begin{claim}
\label{claim:Luserfunctionalrelation}
For $m \neq n$, $X_{\ell}$ is a function of $ \{ Y_{j} \}_{j=1}^{L}$, $\ell =1,\cdots, L$.
\end{claim}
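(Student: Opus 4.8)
The plan is to prove the claim by a purely linear-algebraic argument over $\mathbb{F}_2$: I will show that the map carrying the stacked transmit vector $(X_1,\dots,X_L)$ to the stacked receive vector $(Y_1,\dots,Y_L)$ is injective, so that each $X_\ell$ is an explicit linear function of $(Y_1,\dots,Y_L)$.

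First I would rewrite the channel so as to isolate the total sum. Put $A:=\mathbf{G}^{q-n}$, $B:=\mathbf{G}^{q-m}$, $C:=A\oplus B$, and $S:=\bigoplus_{j=1}^{L}X_j$. Since $\bigoplus_{j\neq\ell}X_j=S\oplus X_\ell$ over $\mathbb{F}_2$, equation~\eqref{eq:Luserreceivedsignals} becomes $Y_\ell=C X_\ell\oplus B S$ for every $\ell$. The structural fact that drives everything is that $q=\max(m,n)$ forces exactly one of $A,B$ to equal $\mathbf{I}_q$ and the other to be $\mathbf{G}^{|n-m|}$; hence $C=\mathbf{I}_q\oplus\mathbf{G}^{|n-m|}$ is lower triangular with unit diagonal. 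Such a matrix is invertible over $\mathbb{F}_2$ exactly when $m\neq n$ (when $m=n$ the two powers coincide and $C=\mathbf{0}$), which is precisely where the hypothesis is used.

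Given $C^{-1}$, the relation $Y_\ell=CX_\ell\oplus BS$ rearranges to $X_\ell=C^{-1}\bigl(Y_\ell\oplus BS\bigr)$, so the entire problem collapses to recovering the single vector $S$ (equivalently $BS$) from the received signals. I would obtain $S$ by summing the $L$ receiver equations: $\bigoplus_{\ell=1}^{L}Y_\ell=CS\oplus(L\cdot BS)$, and then reducing the coefficient $L$ modulo $2$. For $L$ even the cross terms vanish and $\bigoplus_\ell Y_\ell=CS$, whence $S=C^{-1}\bigoplus_\ell Y_\ell$; for $L$ odd the same computation gives $\bigoplus_\ell Y_\ell=(C\oplus B)S=AS$. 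Back-substituting the recovered $S$ then expresses every $X_\ell$ as a linear function of $(Y_1,\dots,Y_L)$, which is the claim.

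The main obstacle is the odd-$L$ case, where only $AS=\bigoplus_\ell Y_\ell$ is available and $A=\mathbf{G}^{q-n}$ is invertible only when $q=n$, i.e.\ when $m<n$; the complementary regime $m>n$ with $L$ odd is delicate and has to be treated separately. Note that this is not restrictive for the application in Theorem~\ref{thm:Luser_computingcapacity}, whose analysis is carried out under the focus convention $m<n$, so that $A=\mathbf{I}_q$ and $S=\bigoplus_\ell Y_\ell$ directly. As an independent safeguard against a hidden rank deficiency, I would also verify invertibility of the full $Lq\times Lq$ block matrix $\mathcal{H}=\mathbf{I}_L\otimes C\oplus J_L\otimes B$ (with $J_L$ the all-ones matrix): writing $J_L=\mathbf{1}\mathbf{1}^{\top}$ and applying Sylvester's determinant identity collapses $\det\mathcal{H}$ to $\det(\mathbf{I}_q\oplus L\cdot C^{-1}B)$, since $\det C=1$; reducing $L$ modulo $2$ reproduces exactly the same parity-and-ordering case split, confirming the analysis.
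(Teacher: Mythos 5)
For the case $m<n$ your argument coincides in substance with the paper's: the paper likewise sums the received signals to isolate $S:=\bigoplus_{i} X_i$, obtaining $\bigoplus_i Y_i = \{\mathbf{I}\oplus(1\oplus L)\mathbf{G}^{n-m}\}S$ --- your even/odd case split is exactly the paper's coefficient $(1\oplus L)$ read modulo $2$ --- and then inverts matrices of the form ``identity plus nilpotent shift.'' Your back-substitution $X_\ell = C^{-1}(Y_\ell \oplus BS)$ is, if anything, a cleaner packaging than the paper's use of the second partial sum $\bigoplus_{i=2}^L Y_i$ with its matrix $\mathbf{B}=\mathbf{I}\oplus\mathbf{G}^{n-m}$, and your even-$L$ recovery of $S=C^{-1}\bigoplus_\ell Y_\ell$ additionally covers $m>n$ with $L$ even.

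The gap is the regime $m>n$ with $L$ odd, which the claim as stated does cover and which you defer. Your stated excuse --- that Theorem~\ref{thm:Luser_computingcapacity} is analyzed ``under the focus convention $m<n$'' --- is not available: the claim is invoked in the upper-bound proof of Section~\ref{sec:Luser_upperbound}, whose conclusion is stated for all $m\neq n$, and for $L\geq 3$ the $(m,n)$ and $(n,m)$ networks are \emph{not} mirror images of each other (the paper itself remarks this). Moreover, ``treated separately'' cannot succeed, and your own safeguard already shows why: for $L$ odd your Sylvester computation gives $\det\mathcal{H}=\det(A)=\det(\mathbf{G}^{m-n})=0$, and concretely, taking $X_1=\cdots=X_L=x$ with $x\neq \mathbf{0}$ supported on the bottom $m-n$ levels (so $\mathbf{G}^{m-n}x=\mathbf{0}$) yields $Y_\ell = \mathbf{G}^{m-n}x \oplus (L-1)x = \mathbf{0}$ for every $\ell$, indistinguishable from the all-zero input. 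So $X_\ell$ is genuinely \emph{not} a function of $[Y_i]_{i=1}^L$ there; only the weaker reconstructability survives, namely $\bigoplus_{j\neq\ell}X_j = (\mathbf{I}\oplus\mathbf{G}^{m-n})^{-1}\bigoplus_{i\neq\ell}Y_i$ and hence $\mathbf{G}^{m-n}X_\ell = Y_\ell \oplus \bigoplus_{j\neq\ell}X_j$. To be fair, this defect is inherited from the paper, whose proof treats only $m<n$ and asserts the case $m>n$ follows ``similarly'' --- which the same counterexample refutes for odd $L$. Your proposal thus proves exactly as much as the paper's proof actually establishes, and your singularity check is sharper than the paper's hand-wave; but you should say plainly that the claim as written fails for $m>n$ with $L$ odd (and that the converse argument there needs the weaker statement above), rather than leaving the case open as merely ``delicate.''
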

\begin{proof}
By symmetry, it suffices to consider the case of $\ell=1$. Consider the case of $m < n$. From~\eqref{eq:receivedsignals}, we get
\begin{align*}
\sum_{j=1}^{L}Y_j &= \left \{ {\bf I} + (L - 1 ) {\bf G}^{n-m} \right\} \sum_{j=1}^{L}X_j;\\
\sum_{j=2}^{L}Y_j &= \left \{ {\bf I} + (L - 1 ) {\bf G}^{n-m} \right\} \sum_{j=1}^{L}X_j \\
& \;\;\;\;  - \left( X_1 + {\bf G}^{n-m} \sum_{j=2}^{L} X_j \right) \\
&= (L-1) {\bf G}^{n-m} \sum_{j=1}^{L}X_j +   \left \{ {\bf I} - {\bf G}^{n-m} \right \}  \sum_{j=2}^{L}X_j.
\end{align*}
The above two equations can be written as follows respectively:
\begin{align*}
\sum_{j=1}^{L}X_j &= {\bf A}^{-1} \sum_{j=1}^{L}Y_j \\
\sum_{j=2}^{L}X_j &= {\bf B}^{-1} \left[ \sum_{j=2}^{L}Y_j -  (L-1) {\bf G}^{n-m} \left \{ {\bf A}^{-1} \sum_{j=1}^{L}Y_j \right \} \right]
\end{align*}
where ${\bf A}:= {\bf I} + ( L -1 ) {\bf G}^{n-m}$ and ${\bf B}:= {\bf I} - {\bf G}^{n-m}$. Note that when $m \neq n$, both of ${\bf A}$ and ${\bf B}$ are invertible. So $X_1$ can be computed from the above two equations, and thus $X_1$ is a function of $ \{Y_{j} \}_{j=1}^{L}$. Similarly we can show this for the case of $m>n$. Notice that this proof holds for the general $p$-ary model.
\end{proof}

\section{Discussion}
\label{sec:discussion}

\subsection{On the Optimality of Network Decomposition}

A recent work in~\cite{SuhGastpar:SPAWC2013} has shown the optimality of our network-decomposition-based approach for two other problem settings under the 2-user ADT symmetric network: (1) the two-unicast problem in which each receiver wishes to decode a message from its corresponding transmitter; (2) the classical multicast problem in which each receiver wants to decode every message. It would be interesting to explore the optimality of the decomposition approach for more general problem settings.

\subsection{Asymmetric Networks}
For the general asymmetric network, the computation capacity is not characterized even for the $L=2$ case. This is mainly because we found it difficult to develop a network decomposition theorem for the general asymmetric network. Moreover, we conjecture that our upper bound may be loose for some asymmetric network. This conjecture comes from our inspection on the example of $(n_{12},n_{11}; n_{21}, n_{22}) = (3,3;4,5)$. In this example, our upper bound gives $C_{\sf comp} \leq \frac{\max(n_{11}+n_{21}) + \max (n_{22},n_{12})}{3} = 3$. On the other hand, when applying all of the achievability techniques developed in Theorem~\ref{thm:sym_computingcapacity}, we could achieve only a computation rate of $\frac{8}{3}$. It is expected that both of new inner and upper bounds are needed for the asymmetric network.

\subsection{A Class of Linear Deterministic Networks}
As mentioned earlier, although the ADT network setting that we considered is somewhat specialized, the results in Theorems~\ref{thm:computingcapacity},~\ref{thm:upperbound},~\ref{thm:sym_computingcapacity}, and~\ref{thm:Luser_computingcapacity} apply to general $p$-ary models. Furthermore, we expect that the results can be useful for a more general class of linear deterministic networks in which even the channel transfer matrices are of an arbitrary form. In particular, the upper-bound technique can be readily applied to the general setting, although it may not guarantee the optimality. As for achievability, it would be an interesting future work to discover elementary subnetworks (if any) that can constitute an original network without loss of optimality.

\subsection{Multi-hop Networks}

In~\cite{Ramamoorthy:ISIT,RaiDey:it12,RamamoorthyLangberg:journal}, function multicasting has been explored in the context of multi-hop networks. In particular, Rai and Dey in~\cite{RaiDey:it12} found some interesting equivalence relationship between sum-networks and multiple-unicast networks. Due the relationship, it has been believed that the sum-network problem is as hard as the multiple-unicast problem, and indeed the computation capacity of the sum-network has been open. For two-source $L$-destination or $L$-source two-destination networks, the computation capacity was established only when the entropy of each source is constrained to be 1~\cite{RamamoorthyLangberg:journal}.

One natural next step is exploiting the insights developed in this work, to characterize necessary and sufficient conditions of two-source two-destination multi-hop networks when the entropy of each source is limited by 2.

\subsection{Role of Feedback for Computation}

The role of feedback for computation has initially been studied in~\cite{SuhGoelaGastpar:isit12} where it is shown that feedback can increase the computation rate. Interestingly the feedback gain is shown to be significant - qualitatively similar to the gain in the two-user Gaussian interference channel~\cite{SuhTse}.
However, the result of~\cite{SuhGoelaGastpar:isit12} relies on a separation approach that naturally comes in the course of characterizing the feedback multicast capacity. Recently~\cite{SuhGastpar:isit13} has established the exact feedback computation capacity of the $2$-by-$2$ ADT network considered herein to show that the feedback gain can be more significant. It would be interesting to explore this feedback gain under more realistic scenarios where feedback is offered through rate-limited bit-piped links~\cite{AlirezaSuhAves} or through the corresponding backward communication network~\cite{SuhTse:isit12}.

\section{Conclusion}
\label{sec:conclusion}

We have established the computation capacity of a two-transmitter two-receiver ADT symmetric network where each receiver wishes to compute a modulo-2-sum function of two Bernoulli sources generated at the two transmitters. For the $L$-user case, we also characterized the computation capacity in the limit of $L$. In the process of obtaining these results, we derived new upper bounds and established a network decomposition theorem that provides a conceptually-simple achievability proof. We expect that the network-decomposition-based framework would give insights into solving more general network problems.

\appendices

\section{Proof of Lemma~\ref{lemma:network_class}}
\label{app:lemma2}

Consider the following cases: (1) $(n_{12} \leq n_{11},  n_{21} \leq n_{22})$; (2) $(n_{12} \geq n_{11}, n_{21} \geq n_{22})$; (3) $(n_{12} \leq n_{11}, n_{21} \geq n_{22})$; and (4) $(n_{12} \geq n_{11}, n_{21} \leq n_{22})$.
Note that Case (1) and Case (2) are symmetric, so are Case (3) and Case (4). Hence, only for Case (1) and (3), we will show that there exists $(i,j)$ such that ${\bf G}^{q-n_{ij}}X_i$ can be reconstructed from the pair $(Y_1,Y_2)$.

\emph{Case (1)} $(n_{12} \leq n_{11}, n_{21} \leq n_{22})$:
Since $n_{11} \geq n_{12}$, there is no reason to send information on the last $q-n_{11}$ levels in $X_1$. Note that any signal on the levels would be wiped out in ${\bf G}^{q-n_{11}} X_1$. Hence we focus on the case in which $X_1=[ \tilde{X}_1; {\bf 0}_{q-n_{11}}]$. Now let us consider
\begin{align*}
Y_1 \oplus {\bf G}^{n_{22}-n_{21}} Y_2 &=  ( {\bf G}^{q-n_{11}}  \oplus {\bf G}^{q + n_{22} -n_{21} - n_{12}} ) X_1 \\
&= \left[
         \begin{array}{c}
           {\bf 0}_{q-n_{11}} \\
           \left \{ {\bf I}_{n_{11}}  \oplus {\bf G}_{n_{11}}^{n_{11} - n_{12} - ( n_{21} -n_{22}) } \right \} \tilde{X}_1 \\
         \end{array}
       \right]
\end{align*}
where the second equality follows from $X_1=[ \tilde{X}_1; {\bf 0}_{q-n_{11}}]$. Here ${\bf G}_{n_{11}}$ denotes the $n_{11}$-by-$n_{11}$ shift matrix. Since $n_{11} - n_{12} \neq n_{21} - n_{22}$, ${\bf G}_{n_{11}}^{n_{11} - n_{12} - ( n_{21} -n_{22}) } \neq {\bf I}_{n_{11}}$ and therefore ${\bf I}_{n_{11}}  \oplus {\bf G}_{n_{11}}^{n_{11} - n_{12} - ( n_{21} -n_{22}) }$ is invertible, implying that $\tilde{X}_1$ can be reconstructed from $(Y_1,Y_2)$. Hence, ${\bf G}^{q-n_{11}}X_1$ can also be reconstructed from $(Y_1,Y_2)$.

\emph{Case (3)} $(n_{12} \leq n_{11}, n_{21} \geq n_{22})$: First consider the case of $n_{21} - n_{22} > n_{11} - n_{12}$. In this case, we get:
\begin{align*}
{\bf G}^{n_{21}-n_{22}} Y_1 \oplus  Y_2 &= ( {\bf G}^{q- n_{11} + n_{21}-n_{22}} \oplus {\bf G}^{q-n_{12}} ) X_1 \\
 &= \left[
         \begin{array}{c}
           {\bf 0}_{q-n_{11}} \\
           ( {\bf G}_{n_{11}}^{ n_{21} -n_{22} } \oplus {\bf G}_{n_{11}}^{ n_{11} -n_{12} } ) \tilde{X}_1 \\
         \end{array}
       \right].
\end{align*}
Since $n_{11} - n_{12} \neq n_{21} - n_{22}$, ${\bf G}_{n_{11}}^{n_{21} - n_{22}} \neq {\bf G}_{n_{11}}^{n_{11} - n_{12}}$.
So ${\bf G}_{n_{11}}^{ n_{11} -n_{12} } \tilde{X}_1$ can be reconstructed from $(Y_1,Y_2)$. Therefore, ${\bf G}^{ q -n_{12} } {X}_1$ can also be reconstructed from $(Y_1,Y_2)$.
For the other case $n_{21} - n_{22} < n_{11} - n_{12}$, we get:
\begin{align*}
{\bf G}^{n_{11}-n_{12}} Y_1 \oplus  Y_2 &= ( {\bf G}^{q- n_{21} + n_{11}-n_{12}} \oplus {\bf G}^{q-n_{22}} ) X_2 \\
&= \left[
         \begin{array}{c}
           {\bf 0}_{q-n_{21}} \\
           ( {\bf G}_{n_{21}}^{ n_{11} -n_{12} } \oplus {\bf G}_{n_{21}}^{ n_{21} -n_{22} } ) \tilde{X}_2 \\
         \end{array}
       \right].
\end{align*}
Since $n_{11} - n_{12} \neq n_{21} - n_{22}$, ${\bf G}_{n_{21}}^{n_{11} - n_{12}} \neq {\bf G}_{n_{21}}^{n_{21} - n_{22}}$. So  ${\bf G}_{n_{21}}^{ n_{21} -n_{22} } \tilde{X}_2$ can be reconstructed from $(Y_1,Y_2)$. Therefore ${\bf G}^{ q -n_{22} } {X}_2$ can also be reconstructed from $(Y_1,Y_2)$.

\section{Proof of Theorem~\ref{thm:networkdecomposition}}
\label{app:proof_NDT}

The $L$-user $(m,n)$ network studied in this paper is naturally represented by a bi-partite graph as in Figure~\ref{fig:model}.
Each transmitter and each receiver is represented by $\max (n,m)$ vertices.
For the argument developed here, we consider the undirected version of this graph.
First, we attach labels to each vertex.
In particular, at each transmitter and each receiver separately, we label the vertices by the integers
$\{ 0, 1, \ldots, \max (n,m)-1\}$ from top to bottom.
The following observation is key:

\begin{lemma}\label{Lemma-graphcoloring-nopath}
In an $L$-user $(m,n)$ model, for any two vertices, if their respective labels $u$ and $v$
satisfy $(u -v) \mod |n-m| \not= 0,$ then there is no path between them.
\end{lemma}

\begin{proof}
Consider any edge in the graph representation of our network (as illustrated e.g. in  Figure~\ref{fig:model}).
Suppose that the two vertices connected by this edge have labels $a$ and $b.$
Then, the structure of the graph implies that we must have $(a -b) \mod |n-m| = 0.$
Now, suppose that there is a path between two arbitrary vertices with labels $u_1$ and $u_\ell,$
passing through vertices with labels $u_2, u_3, \ldots, u_{\ell-1}.$
Furthermore, suppose that $(u_1 -u_\ell) \mod |n-m| \not= 0.$
This implies that there must exist $i$ ($1\le i < \ell$) such that
$(u_i -u_{i+1}) \mod |n-m| \not= 0.$
But there cannot be an edge between these two vertices,
and thus, the claimed path cannot exist.
This proves the lemma.
\end{proof}

We now proceed to the proof of Theorem~\ref{thm:networkdecomposition}.
We introduce the following two terms to have more compact arguments.
The term {\em direct edges}  refers to the edges connecting a transmitter with its corresponding receiver.
The term {\em cross edges}  refers to the edges connecting a transmitter with all receivers other than its corresponding receiver. Owing to the symmetry of the considered models, cross edges are the same for each receiver.

For Part {\em (1)}, consider the $(km, kn)$ model. The proof uses graph coloring with $k$ colors, identified by integers $\{ 0, 1, \ldots, k-1 \}.$
A vertex with label $u$ (where $u = 0, 1, 2,  \ldots, k \max (m,n)-1$) receives color $u \mod k$.
Hence, if two nodes (with labels $u$ and $v$) have different colors, this means that $(u-v) \mod k \not= 0.$
But this also implies that $(u-v) \mod k | m-n| \not= 0.$
Then, Lemma~\ref{Lemma-graphcoloring-nopath} implies that there cannot be a path between these vertices.
Hence, each color represents an independent graph.
Since we used $k$ colors, we thus have exactly $k$ independent graphs.
By the symmetry of the construction, all vertices and edges must be equally distributed,
and hence, each graph represents precisely an $(m, n)$ model. Alternatively, we now offer an explicit proof of this fact for the case $m<n.$ The case $m>n$ follows along the same lines. Let us first observe that in the $(km, kn)$ model (with $m<n$), there are direct edges between any transmitter vertex with label $u$ and its corresponding receiver vertex with the same label $u.$ Moreover, there are cross edges from transmitter vertex with label $u$ to receiver vertex with label $v$
if and only if $v-u = k(n-m).$ Now, consider the subgraph whose vertices are colored with the color identified by the integer $0.$ This subgraph thus contains all vertices with labels $0, k, 2k, \cdots.$
That is, in this case, we can express all vertex labels as $\ell k.$
Clearly, there are direct edges between transmitter vertex with label $\ell k$ and receiver vertex with the same label $\ell k.$
Moreover, there is a cross edge between transmitter vertex with label $\ell_1k$ and receiver vertex with label $\ell_2k$
if and only if $\ell_2k - \ell_1k = k(n-m),$ which can be rewritten as $\ell_2-\ell_1 = n-m.$
In other words, in our subgraph, if we assign new labels to each vertex by dividing the existing label by $k,$
meaning that a vertex previously labeled as $\ell k$ is now simply labeled as $\ell,$
we observe that each transmitter and each receiver has exactly $n$ vertices,
that there are direct edges between vertices with the same labels, and that there is a cross edge
from transmitter vertex with label $\ell_1$ to receiver vertex with label $\ell_2$ if and only if $\ell_2-\ell_1=n-m.$
But this is precisely an $(m,n)$ model (with $m<n$).
The same argument can be applied to the remaining $k-1$ colors.

For Part {\em (2)}, we consider the model $(2m+1, 2n+1).$
Again, we proceed by graph coloring, this time with two colors:
All vertices whose label $u$ (where $u = 0, 1, 2,  \ldots, 2\max (m,n)$)
is an even number receive one color, and all vertices whose label $u$ is an odd number
receive the other color.
Hence, if two nodes (with labels $u$ and $v$) have different colors, this means that $(u-v) \mod 2 \not= 0.$
But this also implies that $(u-v) \mod 2|n-m| \not= 0.$
Then, Lemma~\ref{Lemma-graphcoloring-nopath} implies that there cannot be a path between these vertices.
Hence, each of the two colors represents an independent graph.
The remainder of the proof is to establish the structure of these two subgraphs.
We will do so for the case $m<n;$ the case $m>n$ follows along the same lines.
First, we observe that in the $(2m+1, 2n+1)$ model (with $m<n$),
there is a direct edge between transmitter vertex with label $u$ and receiver vertex with the same label $u.$
Moreover, there is a cross edges from transmitter vertex with label $u$ to receiver vertex with label $v$
if and only if $v-u=2n+1-(2m+1)=2(n-m).$
Now, consider the subgraph corresponding to the odd-labeled vertices.
That is, in this case, we can express each label in the form $2\ell+1.$
We start by observing that each transmitter and each receiver has exactly $n$ odd-labeled vertices.
Moreover, there are direct edges between transmitter vertex with label $2\ell+1$
and receiver vertex with the same label $2\ell+1.$
Additionally, there is a cross edge from transmitter vertex with label $2\ell_1+1$ to receiver vertex with label $2\ell_2+1$
if and only if $2\ell_2+1-(2\ell_1+1)=2(n-m),$ which can be rewritten as $\ell_2-\ell_1=n-m.$
Relabeling the vertices of every transmitter and every receiver by subtracting 1 from the existing label
and then dividing by two (such that the vertex previously labeled as $2\ell+1$ is now labeled as $\ell$),
each transmitter and each receiver has vertices labeled $0, 1, \ldots, n-1,$
there are direct edges, and there is a cross edge between transmitter vertex $u$ and receiver vertex $v$
if and only if $v-u=n-m.$ Hence, this is an $(m,n)$ model (with $m<n$).
Next, consider the subgraph corresponding to the even-labeled vertices.
That is, in this case, we can express each label in the form $2\ell.$
We start by observing that each transmitter and each receiver has exactly $n+1$ even-labeled vertices.
Moreover, there are direct edges between transmitter vertex with label $2\ell$
and receiver vertex with the same label $2\ell.$
Additionally, there is a cross edge from transmitter vertex with label $2\ell_1$ to receiver vertex with label $2\ell_2$
if and only if $2\ell_2 - 2\ell_1=2(n-m),$ which can be rewritten as $\ell_2-\ell_1=n-m.$
Relabeling the vertices of every transmitter and every receiver by dividing by two (such that the vertex previously labeled as $2\ell$ is now labeled as $\ell$),
each transmitter and each receiver has vertices labeled $0, 1, \ldots, n,$
there are direct edges, and there is a cross edge between transmitter vertex $u$ and receiver vertex $v$
if and only if $v-u=n-m.$ Hence, this is an $(m+1,n+1)$ model (with $m<n$).

For Part {\em (3)}, we use graph coloring with $|n-m|$ colors,
identified by integers $\{ 0, 1, \ldots, |n-m|-1 \}.$
A vertex with label $u$ (where $u = 0, 1, 2,  \ldots, \max (m,n)-1$) receives color $u \mod |n-m|$.
Hence, if two nodes (with labels $u$ and $v$) have different colors, this means that $(u-v) \mod |n-m| \not= 0.$
Then, Lemma~\ref{Lemma-graphcoloring-nopath} implies that there cannot be a path between these vertices.
Hence, each color represents an independent graph.
But since there are $|n-m|$ colors, this means that there are $|n-m|$ independent subgraphs.

The remainder of the proof is to establish the structure of these subgraphs.
For convenience, we define $r  =   \left\lfloor  \frac{\min\{ m,n \}}{|n-m|} \right\rfloor$
and $a = \min\{ m,n \} \mod |n-m|.$

We explicitly consider the case $m<n.$
Note that in this case, there is a direct edge from transmitter vertex with label $u$ to receiver vertex with the same label $u.$
Moreover, there is a cross edge from transmitter vertex with label $u$ to receiver vertex with label $v$
if and only if $v-u=n-m.$
Also note that in this case, we have $a = m \mod (n-m) = n \mod (n-m),$ since for all integers $n$ and $m,$
we have $m \mod (n-m) = n \mod (n-m).$

First, let us consider the color $0.$ In the corresponding subgraph, each transmitter and each receiver
contains all vertices with labels $0, n-m, 2(n-m), \cdots.$
If $a \ge 1,$ that is, $m \mod (n-m) \ge 1$ or equivalently, $n \mod (n-m) \ge 1,$ then there are $\left \lceil  \frac{n}{n-m} \right \rceil = \left \lceil  1 + \frac{m}{n-m} \right \rceil = r+2$ such vertices.
Next, consider the edges in this subgraph.
Clearly, there is a direct edge from transmitter vertex with label $\ell(n-m)$ to receiver vertex with the same label $\ell(n-m).$
Moreover, there is a cross edge from transmitter vertex with label $\ell_1(n-m)$ to receiver vertex with label $\ell_2(n-m)$ if and only if $\ell_2(n-m)-\ell_1(n-m)=n-m,$
which can be rewritten as $\ell_2-\ell_1=1.$ Finally, assigning new vertex labels in the subgraph by dividing the existing label by $(n-m)$
reveals that the subgraph is exactly an $(r+1, r+2)$ model.
Otherwise, if $a=0,$ that is, $m \mod (n-m) =0$ or equivalently, $n \mod (n-m) =0$
(implying that both $m$ and $n$ are divisible by $(n-m)$), then there are only
$  \frac{n}{n-m}=  1 + \frac{m}{n-m} = r+1$ vertices in each transmitter and receiver.
Again, clearly, there are direct edges between each transmitter-receiver pair.
Moreover, there is a cross edge from transmitter vertex with label $\ell_1(n-m)$ to receiver vertex with label $\ell_2(n-m)$ if and only if $\ell_2(n-m)-\ell_1(n-m)=n-m,$
which can be rewritten as $\ell_2-\ell_1=1.$ Finally, assigning new vertex labels in the subgraph by dividing the existing labels by $(n-m)$
reveals that the subgraph is exactly an $(r, r+1)$ model.

Now, let us proceed to color $j,$ where $j \in \{1, 2, 3, n-m-1 \}.$ In the corresponding subgraph, each transmitter and each receiver
contains all vertices with labels $j, n-m+j, 2(n-m)+j, \cdots.$ If $a \ge j+1,$ that is, $m \mod (n-m) \ge j+1$ or equivalently, $n \mod (n-m) \ge j+1$
then there are $\left \lceil  \frac{n}{n-m} \right \rceil = \left \lceil  1 + \frac{m}{n-m} \right \rceil = r+2$ such vertices.
Next, consider the edges in this subgraph.
Clearly, there is a direct edge from transmitter vertex with label $\ell(n-m)+j$ to receiver vertex with the same label $\ell(n-m)+j.$
Moreover, there is a cross edge from transmitter vertex with label $\ell_1(n-m)+j$ to receiver vertex with label $\ell_2(n-m)+j$ if and only if $\ell_2(n-m)+j-(\ell_1(n-m)+j)=n-m,$
which can be rewritten as $\ell_2-\ell_1=1.$ Finally, assigning new vertex labels in the subgraph by subtracting $j$ from the existing label and dividing by $(n-m)$
reveals that the subgraph is exactly an $(r+1, r+2)$ model.
Otherwise, if $a \le j,$ that is, $m \mod (n-m) \le j$ or equivalently, $n \mod (n-m) \le j,$
then there are only $\left \lfloor  \frac{n}{n-m} \right \rfloor = \left \lfloor  1 + \frac{m}{n-m} \right \rfloor = r+1$
vertices in each transmitter and receiver.
Again, clearly, there are direct edges between each transmitter-receiver pair.
Moreover, there is a cross edge from transmitter vertex with label $\ell_1(n-m)+j$ to receiver vertex with label $\ell_2(n-m)+j$ if and only if $\ell_2(n-m)+j-(\ell_1(n-m)+j)=n-m,$
or $\ell_2-\ell_1=1.$ Finally, assigning new vertex labels in the subgraph by subtracting $j$ from the existing label and then dividing by $(n-m)$
reveals that the subgraph is exactly an $(r, r+1)$ model.

In summary, we observe that color $j$ (with $j \in \{0, 1, 2, 3, n-m-1 \}$) represents an $(r+1, r+2)$ model if $aÊ\ge j+1$
and an $(r, r+1)$ model if $a < j+1.$ In other words, the colors with indices $j=0, 1, \cdots, a-1$ represent $(r+1, r+2)$ models,
and colors with indices $j=a, a+1, \cdots, (n-m)$ represent $(r, r+1)$ models.
This means that in the decomposition,
we will have exactly $a$ models of type $(r+1, r+2)$ and the rest, i.e., $n-m - a$ models
of type $(r, r+1).$

The case $m>n$ follows along the same lines and is omitted for brevity.
It is found in this case that we obtain $a$ models of the type $(r+2, r+1)$
and $|n-m|-a$ models of the type $(r+1, r).$

\bibliographystyle{ieeetr}
\bibliography{bib_function}

\begin{IEEEbiographynophoto}{Changho Suh}
(S'10-M'12) is an Ewon Assistant Professor in the School of Electrical Engineering at Korea Advanced Institute of Science and Technology (KAIST) since 2012. He received the B.S. and M.S. degrees in Electrical Engineering from KAIST in 2000 and 2002 respectively, and the Ph.D. degree in Electrical Engineering and Computer Sciences from UC-Berkeley in 2011. From 2011 to 2012, he was a postdoctoral associate at the Research Laboratory of Electronics in MIT. From 2002 to 2006, he had been with the Telecommunication R\&D Center, Samsung Electronics.

Dr. Suh received the 2015 Hadong Young Engineer Award from the Institute of Electronics and Information Engineers, the 2013 Stephen O. Rice Prize from the IEEE Communications Society, the David J. Sakrison Memorial Prize from the UC-Berkeley EECS Department in 2011, and the Best Student Paper Award of the IEEE International Symposium on Information Theory in 2009.
\end{IEEEbiographynophoto}

\begin{IEEEbiographynophoto}{Naveen Goela}
received undergraduate degrees in computer science and mathematics, and a M.\,Eng. degree in computer science in 2004, all from the Massachusetts Institute of Technology (MIT) in Cambridge, MA, USA. His research at MIT focused on multi-camera visual hull reconstruction, and motion graphs. In 2013, he received the Ph.D. in electrical engineering and computer science with designated emphasis in statistics from the University of California, Berkeley. His past experience includes a research internship at Mitsubishi Electric Research Labs (MERL) in 2006, a visiting scholarship at Ecole Polytechnique F\'ed\'erale de Lausanne (EPFL) in Switzerland during 2011-2012, and a post-doctoral research position at Qualcomm Research in Berkeley, CA, USA during 2013-2014. Currently, he is a member of Technicolor Research in Los Altos, CA, USA. From 2007-2010, he was awarded the U.S. National Defense Science and Engineering Graduate (NDSEG) Fellowship. He received Qualcomm's company-wide ImpaQt award in 2014 for patents related to 5G wireless communications, and Technicolor's company-wide innovation awards in 2015 for research focusing on the signal processing of light signals. His research interests include networks, statistics, machine learning, and information theory.
\end{IEEEbiographynophoto}

\begin{IEEEbiographynophoto}{Michael Gastpar}
received the Dipl. El.-Ing. degree from the Eidgen\"{o}ssishe Technische Hochschule (ETH), Z\"urich, Switzerland, in 1997, the M.S. degree in electrical engineering from the University of Illinois at Urbana-Champaign, Urbana, IL, USA, in 1999, and the Doctorat \`es Science degree from the Ecole Polytechnique F\'ed\'erale (EPFL), Lausanne, Switzerland, in 2002. He was also a student in engineering and philosophy at the Universities of Edinburgh and Lausanne.

During the years 2003-2011, he was an Assistant and tenured Associate Professor with the Department of Electrical Engineering and Computer Sciences at the University of California, Berkeley. Since 2011, he has been a Professor in the School of Computer and Communication Sciences, Ecole Polytechnique F\'ed\'erale (EPFL), Lausanne, Switzerland. He is also a professor at Delft University of Technology, The Netherlands. He was a Researcher with the Mathematics of Communications Department,
Bell Labs, Lucent Technologies, Murray Hill, NJ. His research interests are in network information theory and related coding and signal processing techniques, with applications to sensor networks and neuroscience.

Dr. Gastpar received the IEEE Communications Society and Information Theory Society Joint Paper Award in 2013 and the EPFL Best Thesis Award in 2002. He was an Information Theory Society Distinguished Lecturer (2009-2011), an Associate Editor for Shannon Theory for the IEEE TRANSACTIONS ON INFORMATION THEORY (2008-2011), and he has served as Technical Program Committee Co-Chair for the 2010 International Symposium on Information Theory, Austin, TX.
\end{IEEEbiographynophoto}

\end{document}